\documentclass[journal]{IEEEtran}

\usepackage{cite}
\usepackage{graphicx}
\usepackage{amsmath}
\usepackage{amssymb}
\usepackage{algorithm}
\usepackage{algorithmic}
\usepackage{array}
\usepackage{url}
\usepackage{booktabs}
\usepackage{multirow}
\usepackage{stfloats}
\usepackage{placeins}
\usepackage[caption=false,font=footnotesize]{subfig}
\usepackage{xcolor}
\usepackage[percent]{overpic}
\newtheorem{proposition}{\bfseries Proposition}

\UseRawInputEncoding
\begin{document}

\title{Sensing While Communicating: Active Online Spectrum Cartography via Air-Ground Cooperation}

\author{Shangjie~Zhuang, \IEEEmembership{Graduate Student Member, IEEE}, Jiahui~Liang, \IEEEmembership{Graduate Student Member, IEEE}, and~Shijian~Gao, \IEEEmembership{Member, IEEE}%
}

\markboth{Journal of \LaTeX\ Class Files,~Vol.~14, No.~8, August~2015}%
{Shell \MakeLowercase{\textit{et al.}}: Bare Demo of IEEEtran.cls for IEEE Journals}

\maketitle

\begin{abstract}
Spectrum cartography is crucial for spectrum-aware resource management in low-altitude networks, where uncrewed aerial vehicles (UAVs) collect measurements to reconstruct power spectral density (PSD) radio maps. However, limited UAV energy budgets constrain both sampling density and onboard computation, while complex urban blockages hinder reliable measurement transmission to a remote station. To address these challenges, we propose an air--ground cooperative framework for active online spectrum cartography. Through adaptive bandwidth allocation, the UAV actively collects PSD measurements using uncertainty and simultaneously transmits them to a mobile uncrewed ground vehicle (UGV), while the UGV maintains a reliable air--ground link for map updates. Specifically, we first develop an online deep-unfolded tensor decomposition method to accelerate online map updates. We then derive a bit-depth-dependent interpolation-error model to guide the selection of quantization bit depths to balance overhead and accuracy. Finally, we use uncertainty information to coordinate the UAV and UGV for active sampling and link maintenance, respectively. Extensive experiments show that the proposed reconstruction method achieves a nearly 27-fold speedup over online tensor decomposition. In urban scenes, the proposed framework outperforms the representative baselines, reducing both normalized mean squared error (NMSE) and outage ratio by more than 23\% each and increasing the packet-service completion ratio by over 4\%.
\end{abstract}

\begin{IEEEkeywords}
Active sampling, air--ground cooperation, bandwidth allocation,
full-spectrum reconstruction from partial-band observations,
online spectrum cartography, quantization.
\end{IEEEkeywords}

\section{Introduction}
The evolution of low-altitude wireless networks toward intelligence requires accurate radio environment awareness for coverage optimization, network planning, and spectrum management \cite{gao2026iscc}. Radio maps characterize the radio environment by associating each location with radio information \cite{zeng2024tutorial}. Received signal strength (RSS) maps were constructed to support coverage assessment and deployment planning by capturing spatial variations in aggregated received power \cite{romero2022radio}. The lack of frequency information, however, limits their use in spectrum-related applications. This limitation motivates the use of power spectral density (PSD) maps, which characterize received power jointly across space and frequency. 

Spectrum cartography refers to the reconstruction of PSD maps \cite{yang2026survey}. Early work used spline interpolation to estimate spatially varying spectral coefficients \cite{bazerque2011group}. Subsequently, block-term tensor models coupled spatial propagation fields with their corresponding spectra and characterized map recoverability \cite{zhang2020spectrum}. To accommodate off-grid sampling locations and incomplete frequency coverage in practical measurements, interpolation was integrated into tensor-based reconstruction \cite{chen2023offgrid} \cite{sun2024iibtd}. Autoencoders and conditional generative adversarial networks also exploit frequency-spatial correlations to infer maps at unmeasured frequencies \cite{zhou2023frequencyspatial}. Further studies combined neural models with structured tensor factorization \cite{shrestha2022deep}, while recent work integrates compressive sub-Nyquist sensing with neural tensor decomposition \cite{yuan2026neuralfitting}.

However, these methods primarily focus on offline reconstruction from fixed measurement sets. In low-altitude urban environments, these measurements are collected by a UAV, but its limited energy budget restricts dense sampling. Improving sampling efficiency therefore calls for selecting informative locations. This motivates active sensing, which uses uncertainty to guide subsequent trajectories~\cite{shrestha2023spectrum}. A scalable Gaussian process estimates map uncertainty to guide hierarchical informative path planning for UAV sampling in 3D environments \cite{chen2026informative}. Gaussian process ensembles and travel-cost-aware active learning select informative locations while limiting UAV travel \cite{polyzos2024bayesian}. Bayesian uncertainty and graph-based reinforcement learning have also been combined to plan informative and energy-efficient trajectories \cite{lu2025bayesian}. A building-aware 3D reconstructor and diffusion-based planner further adapt UAV trajectories to the urban environment \cite{li2026endtoend}. Although these studies improve sampling efficiency through uncertainty-guided trajectories, onboard map reconstruction is impractical for UAVs with limited energy and onboard resources. Offloading these measurements to a fixed station for reconstruction reduces the onboard computational burden, but building blockage can interrupt transmission and delay map updates. Therefore, overcoming the communication and computing bottleneck is crucial for active online spectrum cartography.

With their deployment flexibility, uncrewed ground vehicles (UGVs) have been integrated to provide communication and computing support for UAVs~\cite{11487611}. For example, in \cite{ye2025aoiairground}, UGVs dispatch UAVs from multiple stops for data collection, with multi-agent curriculum learning coordinating their actions to reduce information age. For vehicular crowdsensing, \cite{zhao2026uavcarrier} uses UGVs as mobile carriers that dispatch and recall UAVs to sense points of interest. They are coordinated by a heterogeneous learnable policy to improve data collection and geographic fairness while reducing energy consumption. When base stations are unavailable, \cite{chen2024secure} coordinates vehicle mobility and task offloading so that surveillance UAVs can offload cached tasks to UGVs with computing resources.

Moreover, measurement offloading introduces a further challenge, as sensing and communication share a fixed operating bandwidth. Online spectrum cartography needs to recover the full PSD radio map from partial frequency observations. This calls for UAV--UGV coordination to balance informative sampling and reliable communication for accurate reconstruction under the shared bandwidth, which is rarely considered in existing works.

Therefore, we propose an air--ground cooperative framework for online spectrum cartography in low-altitude urban environments: the UAV actively collects informative PSD measurements and transmits them to the UGV for online spectrum cartography. The UGV then updates an uncertainty map to select the UAV's next sampling location and guide its own repositioning to maintain reliable air--ground connectivity.

\begin{table*}
    \centering
    \caption{Feature comparison of representative radio-mapping, UAV data-collection, and UAV--UGV cooperation studies.}
    \label{tab:literature_taxonomy}
    \begingroup
    \normalsize
    \setlength{\tabcolsep}{4.0pt}
    \renewcommand{\arraystretch}{1.08}
    \begin{tabular*}{0.98\textwidth}{@{\extracolsep{\fill}}
        >{\centering\arraybackslash}m{2.0cm}
        >{\centering\arraybackslash}m{4.0cm}
        >{\centering\arraybackslash}m{1.6cm}
        >{\centering\arraybackslash}m{2.6cm}
        >{\centering\arraybackslash}m{2.3cm}
        >{\centering\arraybackslash}m{3.0cm}@{}}
        \toprule
        Refs.
         & \begin{tabular}[c]{@{}c@{}}Full-spectrum recovery\\from partial observations\end{tabular}
         & \begin{tabular}[c]{@{}c@{}}Active\\sampling\end{tabular}
         & \begin{tabular}[c]{@{}c@{}}Measurement\\quantization\end{tabular}
         & \begin{tabular}[c]{@{}c@{}}Multi-agent\\cooperation\end{tabular}
         & \begin{tabular}[c]{@{}c@{}}Communication\\resource allocation\end{tabular} \\
        \midrule
        \cite{sun2024iibtd,zhou2023frequencyspatial}
         & $\checkmark$ & $\times$     & $\times$     & $\times$     & $\times$     \\
        \cite{romero2017quantized,timilsina2024quantized}
         & $\times$     & $\times$     & $\checkmark$ & $\times$     & $\times$     \\
        \cite{polyzos2024bayesian,shrestha2023spectrum}
         & $\times$     & $\checkmark$ & $\times$     & $\times$     & $\times$     \\
        \cite{li2026endtoend}
         & $\times$     & $\checkmark$ & $\times$     & $\times$     & $\times$     \\
        \cite{li2023datacollection,chen2023shortpacket}
         & $\times$     & $\triangle$ & $\times$     & $\times$     & $\checkmark$ \\
        \cite{chen2024secure}
         & $\times$     & $\times$     & $\times$     & $\checkmark$ & $\checkmark$ \\
        This work
         & $\checkmark$ & $\checkmark$ & $\checkmark$ & $\checkmark$ & $\checkmark$ \\
        \bottomrule
    \end{tabular*}
    \vspace{0.25em}

    \begin{minipage}{0.99\textwidth}
        \footnotesize
        \textit{Legend:} $\checkmark$: explicitly considered; $\times$: outside the primary scope. Full-spectrum recovery denotes reconstruction across all target frequency bands from observations available on only a subset of bands at sampled locations. For active sampling, $\triangle$ denotes planned collection of pre-generated sensor data. Communication resource allocation includes explicit optimization of transmission time, power, bandwidth, or scheduling.
    \end{minipage}
    \endgroup
\end{table*}

Specifically, as illustrated in Fig.~\ref{fig:system_framework}, the current uncertainty guides the selection of the sensing location and frequency band. Broader sensing provides more frequency observations but increases measurement payload and leaves less bandwidth for communication. Coarser quantization reduces the payload, which can degrade reconstruction quality. To manage these trade-offs, the UAV policy jointly selects its motion, sensing and quantization to balance reconstruction accuracy and timely transmission to the UGV. However, informative sensing locations may offer poor communication conditions because of urban building blockage. Accordingly, the support planner repositions the UGV to improve air--ground connectivity and support measurement delivery. At the UGV, fully delivered measurements are used to update the PSD map and uncertainty. In turn, the uncertainty guides the UAV's subsequent sensing and mobility decisions.

Table~\ref{tab:literature_taxonomy} summarizes the key differences between the proposed framework and representative studies. The contributions of this work are summarized as follows:

\begin{itemize}
    \item We propose an air--ground cooperative framework for active online spectrum cartography in low-altitude urban environments. It integrates active UAV sensing and transmission with cooperative communication and computing support from the UGV, achieving high reconstruction accuracy and reliable transmission.
    \item A deep-unfolded online tensor decomposition algorithm is developed for sequentially delivered measurements. It replaces the computational bottleneck with a learned module and unfolds alternating updates into fixed stages, enabling timely map updates while preserving the structure for error analysis.
    \item An interpolation-error model is derived to link quantization to reconstruction. At moderate-to-high bit depths, interpolation error is approximately linear with quantization noise. This relationship guides quantization bit selection and balances the trade-off between accuracy and overhead under the UAV's fixed bandwidth.
    \item Extensive evaluations in low-altitude urban environments show that the proposed framework achieves lower estimation error, a lower outage ratio, and a higher packet-service completion ratio than representative baselines. These results reflect the effectiveness of air--ground cooperation in supporting reliable delivery and accurate reconstruction.
\end{itemize}

The remainder of this paper is organized as follows. Section~II presents the system model and problem formulation. Sections~III--V develop the online spectrum cartography method, quantization model, and air--ground cooperative framework, respectively. Section~VI presents the experimental results, and Section~VII concludes the paper.

\textit{Notation:} Plain italic letters denote scalars, while bold lowercase
and bold uppercase letters denote vectors and matrices, respectively.
Bold and nonbold calligraphic uppercase letters denote tensors and sets,
respectively.
Vector and matrix entries are denoted by $[\mathbf a]_i$ and
$[\mathbf A]_{ij}$, respectively. The quantity $\|\cdot\|_0$ counts nonzero
entries, while $|\mathcal A|$ gives the cardinality of set $\mathcal A$.
The operator $(\cdot)^\top$ denotes transpose; $\|\cdot\|_F$ and $\|\cdot\|_*$ denote
the Frobenius and nuclear norms, respectively. The symbols $\circ$ and
$\odot$ denote outer and Hadamard products, respectively.
The Gaussian distribution $\mathcal N(\mu,\sigma^2)$ has mean $\mu$ and
variance $\sigma^2$. The operator $\lfloor\cdot\rceil$ rounds to the nearest
integer. The indicator $\mathbb I\{\cdot\}$ equals 1 when the enclosed
condition holds and 0 otherwise.

\begin{figure*}[!t]
    \centering
    \includegraphics[width=\textwidth]{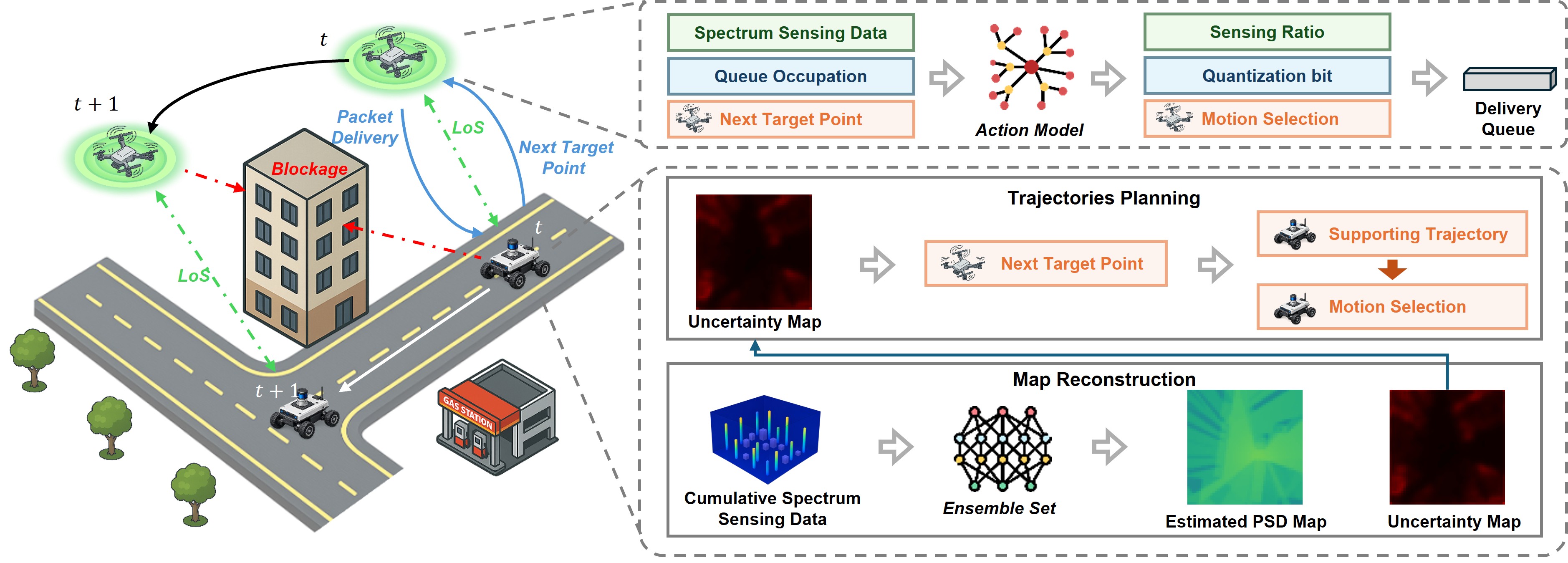}
    \caption{Closed-loop information flow of the proposed air--ground cooperative framework for active online spectrum cartography.}
    \label{fig:system_framework}
\end{figure*}

\section{System Model and Problem Formulation}

This section presents the system model for air--ground cooperative spectrum cartography. The active online spectrum cartography is formulated as an optimization problem under communication and energy constraints.

\subsection{PSD Radio Map Model}

We discretize the horizontal $x$--$y$ plane into $N_x\times N_y$ grid cells indexed by $\mathcal{G}=\{(i,j)\mid 1\leq i \leq N_x,1\leq j \leq N_y\}$. Each grid cell is represented by its center at coordinates $(i,j)$ in grid units. The monitored spectrum is divided into $K$ frequency bands indexed by $k=1,\ldots,K$. The PSD radio map is represented by a tensor $\boldsymbol{\mathcal{H}}\in\mathbb{R}_{+}^{N_x\times N_y\times K}$, where $\boldsymbol{\mathcal{H}}(i,j,k)$ denotes the received PSD at grid cell $(i,j)$ and frequency band $k$. The radio environment is generated by $R$ emitters located at grid coordinates $\boldsymbol{s}_r\in\mathcal{G}$. For source $r$, define $\mathbf{S}_r\in\mathbb{R}^{N_x\times N_y}$ as its large-scale spatial propagation field. At grid cell $(i,j)$, this field is modeled as
\begin{equation}
    [\mathbf{S}_r]_{ij}
    =
    g_r\!\left[
        d\!\left(\boldsymbol{s}_r,(i,j)\right)
    \right]
    +\zeta_r(i,j),
    \label{eq:large_scale_field}
\end{equation}
where $d(\cdot,\cdot)$ is the Euclidean distance between two grid coordinates, $g_r(\cdot)$ accounts for path loss, and $\zeta_r(\cdot)$ represents shadow fading. The power spectrum of source $r$ is denoted by $\boldsymbol{\phi}_r=[\phi_1^{r},\ldots,\phi_K^{r}]^{\top}$. Tensor decomposition (TD) has been widely used for PSD radio map modeling \cite{zhang2020spectrum,chen2023offgrid,timilsina2024quantized,sun2024iibtd}, since it exploits spatial and spectral correlations. Accordingly, the PSD radio map tensor can be expressed as
\begin{equation}
    \boldsymbol{\mathcal{H}}
    =
    \sum_{r=1}^{R} \mathbf{S}_r\circ\boldsymbol{\phi}_r.
    \label{eq:btd_map_model}
\end{equation}
Thus, the entry at grid cell $(i,j)$ and frequency band $k$ is $\boldsymbol{\mathcal{H}}(i,j,k)=\sum_{r=1}^{R}[\mathbf{S}_r]_{ij}\phi_k^{r}$.

\subsection{Partial Spectrum Observation Model}

We model the online spectrum cartography process over $T$ time slots, each lasting $\tau$ seconds. The time slots are indexed by $t \in \mathcal{T}=\{1,\ldots,T\}$. At slot $t\in\mathcal{T}$, the UAV grid coordinate is $\boldsymbol{u}_t\in\mathcal{G}$, and its altitude is $h_{\rm UAV}$.

The UAV divides its operating bandwidth $B$ into $N$ frequency units and allocates them for sensing and communication. Each sensing unit monitors one frequency band. In slot $t$, the selected sensing ratio $\rho_t\in\mathcal{R}$ assigns $n_t^s=\lceil\rho_tN\rceil$ units to sensing and the remaining units to communication. The sensing and communication bandwidths are $B_t^s=(n_t^s/N)B$ and $B_t^c=B-B_t^s$, respectively. As illustrated in Fig.~\ref{fig:bandwidth_allocation}, increasing $\rho_t$ allocates more units to sensing, providing broader spectral coverage but increasing the packet size and reducing the communication bandwidth.

\begin{figure}[t]
    \centering
    \includegraphics[width=\columnwidth]{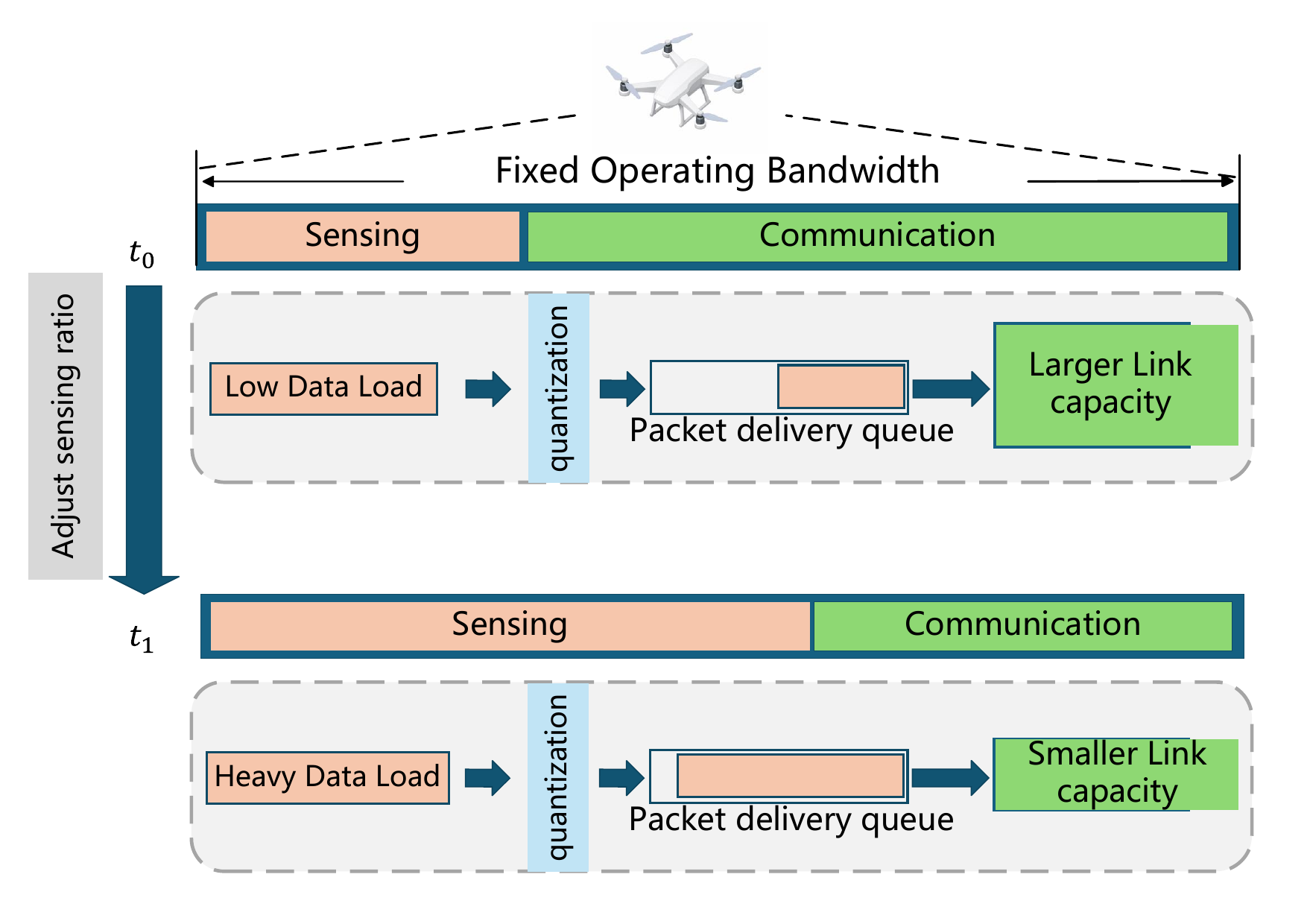}
    \caption{Slot-wise allocation of the operational bandwidth between spectrum sensing and UAV--UGV communication.}
    \label{fig:bandwidth_allocation}
\end{figure}

The UAV senses $n_t^s$ consecutive frequency bands centered on the target frequency band. Let $\boldsymbol{\omega}_t\in\{0,1\}^{K}$ denote the corresponding binary mask, where $[\boldsymbol{\omega}_t]_k=1$ if frequency band $k$ is selected and zero otherwise. Hence, $\sum_{k=1}^{K}[\boldsymbol{\omega}_t]_k=n_t^s$. During spectrum sensing, UAV measurements are affected by small-scale fading $\eta_{r,t}^{(k)}\sim\mathcal{N}(0,\sigma_\eta^2)$ and receiver noise $\epsilon_t^{(k)} \sim \mathcal{N}(0,\sigma_\epsilon^2)$.
The small-scale fading perturbations and receiver-noise samples are assumed independent across frequency bands. The aggregate observation disturbance is
$n_t^{(k)}=\sum_{r=1}^{R}\phi_k^{r}\eta_{r,t}^{(k)}+\epsilon_t^{(k)}$, and $\boldsymbol{n}_t=[n_t^{(1)},\ldots,n_t^{(K)}]^{\top}$. The sparse wideband scanning vector is
\begin{equation}
    \boldsymbol{y}_t
    =
    \boldsymbol{\omega}_t\odot
    \left[\boldsymbol{\mathcal{H}}(\boldsymbol{u}_t,:)+\boldsymbol{n}_t\right],
    \label{eq:partial_observation}
\end{equation}
where $\boldsymbol{\mathcal{H}}(\boldsymbol{u}_t,:)=[\boldsymbol{\mathcal{H}}(\boldsymbol{u}_t,1),\ldots,\boldsymbol{\mathcal{H}}(\boldsymbol{u}_t,K)]^{\top}$ is the full PSD vector at the UAV's current position.

\subsection{Communication and Queue Model}

At slot $t$, let $\boldsymbol{g}_t\in\mathcal{G}_{\rm free}$ denote the UGV grid coordinate. The set $\mathcal{G}_{\rm free}\subseteq\mathcal{G}$ contains all non-building grid cells. Building blockage determines whether the link state $\sigma_t$ is line-of-sight (LoS) or non-line-of-sight (NLoS). The three-dimensional separation is
\begin{equation}
d_t=\sqrt{
        \Delta_{\rm cell}^{2}
        \|\boldsymbol{u}_t-\boldsymbol{g}_t\|_2^{2}
        +(h_{\rm UAV}-h_{\rm UGV})^{2}
    },
\end{equation}
where $\Delta_{\rm cell}$ converts the horizontal grid distance to metres, and $h_{\rm UGV}=0$ corresponds to ground level. The service capacity in slot $t$ is
\begin{equation}
    \begin{aligned}
    C_t={}&\tau B_t^{\rm c}\log_2\!\left(1+10^{\Upsilon_t^{\rm dB}/10}\right)
    \mathbb{I}\!\left\{\Upsilon_t^{\rm dB}
    \geq\Upsilon_{\rm out}^{\rm dB}\right\},
    \end{aligned}
    \label{eq:channel_capacity}
\end{equation}
where $\Upsilon_t^{\rm dB}$ denotes the received signal-to-noise ratio (SNR), and successful data delivery requires this SNR to meet the outage threshold $\Upsilon_{\rm out}^{\rm dB}$. In slot $t$, the UAV quantizes the sensed entries of $\boldsymbol{y}_t$ at the selected bit depth $b_t\in\mathcal{B}$ and encodes the resulting indices into a packet. The packet size is
\begin{equation}
    G_t
    =
    n_t^s G_{\rm band}\frac{b_t}{b_{\rm ref}},
    \label{eq:generated_payload}
\end{equation}
where $G_{\rm band}$ denotes the payload generated by sensing one band at the reference bit depth $b_{\rm ref}$. Let $Q_t$ denote the UAV's buffer size at the start of slot $t$, and let $Q_{\max}$ denote the buffer capacity. The new packet of size $G_t$ is added before transmission. If the buffer overflows, the oldest complete packets are dropped and the total dropped size is denoted by $G_t^{\rm drop}$. Thus, the queue evolves as
\begin{equation}
    Q_{t+1}
    =\max\left(Q_t+G_t-G_t^{\rm drop}-C_t,0\right),
    \quad \forall t\in\mathcal{T}.
    \label{eq:queue_update}
\end{equation}
Packets are served in first-in, first-out order and may require multiple slots for transmission. Let $\Delta\mathcal{D}_t$ denote the data packets that were fully transmitted to the UGV in slot $t$. The cumulative UGV-side dataset is $\mathcal{D}_{t}=\mathcal{D}_{t-1}\cup\Delta\mathcal{D}_t$, $\forall t\in\mathcal{T}$, and is used for map updates.

\subsection{UAV Energy Consumption Model}

The UAV energy model accounts for mobility, spectrum sensing, and communication. The motion energy is
\begin{equation}
    E_t^{\rm mov}
    =
    E_{\rm fly}
    \left\|\boldsymbol{u}_t-\boldsymbol{u}_{t-1}\right\|_1
    +
    E_{\rm hov}
    \mathbb{I}\!\left\{\boldsymbol{u}_t=\boldsymbol{u}_{t-1}\right\},
    \label{eq:uav_motion_energy}
\end{equation}
where $E_{\rm fly}$ is the energy consumed per traversed grid cell and $E_{\rm hov}$ is the energy consumed by hovering. According to~\cite{zhang2014sensingenergy}, sensing more bands consumes more energy for a fixed slot duration. We model this bandwidth dependence and the additional multiband-processing cost as
\begin{equation}
    E_t^{\rm sen}
    =E_{\rm sen}
    \frac{n_t^s\log_2(1+n_t^s)}
    {n_{\max}^{s}\log_2(1+n_{\max}^{s})},
    \label{eq:sensing_energy}
\end{equation}
where $n_{\max}^{s}$ is the largest admissible sensing allocation and $E_{\rm sen}$ is the corresponding sensing energy. Let $P_t^{\rm tx}$ denote the UAV transmit power in watts. The communication energy is modeled as
\begin{equation}
    E_t^{\rm com}=P_t^{\rm tx}\tau.
    \label{eq:communication_energy}
\end{equation}
The total energy consumed in slot $t$ is $E_t=E_t^{\rm mov}+E_t^{\rm sen}+E_t^{\rm com}$.
Let $E_{\max}$ denote the initial UAV energy budget.

\subsection{Problem Formulation}

We formulate air--ground cooperative online spectrum cartography as a constrained optimization problem. The formulation jointly adapts UAV sensing--delivery selection to minimize the expected normalized mean squared error (NMSE) under coupled operational constraints.

At the beginning of slot $t$, the UAV policy $\pi$ selects $\boldsymbol{a}_t^{\rm u}=(m_t^{\rm u},\rho_t,b_t)$, where $m_t^{\rm u}$ denotes the UAV move. Based on the current sensing target, the UGV planner $\mu$ uses a supportive policy to select the UGV move $m_t^{\rm g}$ along a planned path. These decisions determine the UAV and UGV positions, $\boldsymbol{u}_t$ and $\boldsymbol{g}_t$. After $T$ slots, the map estimate is $\widehat{\boldsymbol{\mathcal{H}}}_T$. The corresponding NMSE is
\begin{equation}
    \mathcal{E}_T =
    \frac{\|\widehat{\boldsymbol{\mathcal{H}}}_T-\boldsymbol{\mathcal{H}}\|_F^2}
    {\|\boldsymbol{\mathcal{H}}\|_F^2},
    \label{eq:nmse_def}
\end{equation}
where $\boldsymbol{\mathcal{H}}$ is the true PSD radio map. Accordingly, we formulate problem~$\mathbf{P0}$ as
\begin{subequations}
\label{eq:online_problem}
\begin{align}
    \min_{\pi}\quad
    &\mathbb{E}_{\pi,\mu}\!\left\{\mathcal{E}_T\right\},\\
    \mathrm{s.t.}\quad
    &\mathrm{Eq.}~\eqref{eq:queue_update},
    \quad \forall t\in\mathcal{T},
    \label{eq:online_problem_queue}\\
    &Q_t+G_t-G_t^{\rm drop}\leq Q_{\max},
    \quad \forall t\in\mathcal{T},
    \label{eq:online_problem_buffer}\\
    &\rho_t\in\mathcal{R},\quad
    b_t\in\mathcal{B},
    \quad \forall t\in\mathcal{T},
    \label{eq:online_problem_actions}\\
    &\sum_{t=1}^{T}
    \left(E_t^{\rm mov}+E_t^{\rm sen}+E_t^{\rm com}\right)
    \le E_{\max}.
    \label{eq:online_problem_energy}
\end{align}
\end{subequations}

Problem~$\mathbf{P0}$ minimizes expected NMSE by coupling spectrum sensing and measurement delivery through air--ground cooperation. Constraints~\eqref{eq:online_problem_queue}--\eqref{eq:online_problem_actions} specify queue evolution, buffer capacity, and feasible sensing and quantization decisions, while Constraint~\eqref{eq:online_problem_energy} limits cumulative UAV energy. The choices $\rho_t$ and $b_t$ jointly trade reconstruction quality against delivery: increasing $\rho_t$ senses more bands but leaves less transmission bandwidth, whereas increasing $b_t$ reduces quantization distortion but enlarges the packets.

\section{Online Spectrum Cartography}
\label{sec:odu_td}

Online spectrum cartography requires efficient map updates to support uncertainty-guided sensing. This section first reviews the offline TD method and then extends it to sequentially delivered measurements for online reconstruction. A deep-unfolded version is subsequently developed for acceleration.

\subsection{Recap of the Tensor Decomposition}

Given a fixed set $\mathcal{D}$ of $M$ measurements, each measurement containing a location $\boldsymbol{z}_m$ and corresponding PSD measurements, offline TD~\cite{sun2024iibtd} fits local propagation models to jointly estimate the power spectra and spatial propagation fields in Eq.~\eqref{eq:btd_map_model}. For source $r$, offline TD approximates the propagation field around the center of grid cell $(i,j)$ using the local polynomial $f_{ij}^{r}(\boldsymbol{z}) = \boldsymbol{x}_{ij}^{\top}(\boldsymbol{z})\boldsymbol{\theta}_{ij}^{r},$ where $\boldsymbol{x}_{ij}(\boldsymbol{z})$ contains $D_{\rm p}$ polynomial features centered at $(i,j)$. The corresponding coefficient vector is $\boldsymbol{\theta}_{ij}^{r}\in\mathbb{R}^{D_{\rm p}\times 1}$. Stacking all $R$ coefficient vectors gives $\boldsymbol{\Theta}_{ij}=[(\boldsymbol{\theta}_{ij}^{1})^{\top},\ldots,(\boldsymbol{\theta}_{ij}^{R})^{\top}]^{\top}$.
The propagation estimate of source $r$ at grid cell $(i,j)$ is $f_{ij}^{r}(i,j)=\boldsymbol{e}_r^{\top}\boldsymbol{\Theta}_{ij}\approx[\mathbf{S}_r]_{ij}$. Here, $\boldsymbol{e}_r\in\mathbb{R}^{D_{\rm p}R\times 1}$ is the selection vector whose $[(r-1)D_{\rm p}+1]$-th entry is one and all other entries are zero.

The PSD measurements are recorded in $\boldsymbol{\Gamma}\in\mathbb{R}^{M\times K}$. The binary mask $\boldsymbol{\psi}\in\{0,1\}^{M\times K}$ identifies the available frequency bands, where $[\boldsymbol{\psi}]_{m,k}=1$ if frequency band $k$ is available in measurement $m$ and zero otherwise. Let $\boldsymbol{\Phi}$ collect the PSD of $R$ sources, with its $r$-th row given by $\boldsymbol{\phi}_r^{\top}\in\mathbb{R}^{1\times K}$. Define $\mathbf{X}_{ij}=[\boldsymbol{x}_{ij}(\boldsymbol{z}_1),\ldots,\boldsymbol{x}_{ij}(\boldsymbol{z}_{M})]\in\mathbb{R}^{D_{\rm p}\times M}$. The diagonal matrix $\mathbf{Q}_{ij}\in\mathbb{R}^{M\times M}$ contains the spatial kernel weights and assigns greater influence to measurements near grid cell $(i,j)$. The spatial weights and availability mask form $\mathbf{W}_{ij} = (\mathbf{I}_K\otimes\mathbf{Q}_{ij})
    \operatorname{diag}\!\left[\operatorname{vec}(\boldsymbol{\psi})\right],$
where $\mathbf{I}_K$ is the $K\times K$ identity matrix. The corresponding regression residual is $\boldsymbol{r}_{ij} (\boldsymbol{\Theta}_{ij},\boldsymbol{\Phi})  =  \operatorname{vec}(\boldsymbol{\Gamma})-(\boldsymbol{\Phi}^{\top}\otimes\mathbf{X}_{ij}^{\top})
\boldsymbol{\Theta}_{ij}.$ The offline TD reconstruction problem~$\mathbf{P1}$ is formulated as
\begin{equation}
    \begin{aligned}
        \min_{\substack{\{\boldsymbol{\Theta}_{ij}\},\boldsymbol{\Phi}, \{\mathbf{S}_r\}}}
        &\sum_{(i,j)\in\mathcal{G}}
        \left\|\mathbf{W}_{ij}
        \boldsymbol{r}_{ij}(\boldsymbol{\Theta}_{ij},\boldsymbol{\Phi})
        \right\|_2^2 \\
        &+\nu\sum_{\substack{(i,j)\in\mathcal{G}\\ r=1,\ldots,R}}
        \left(\boldsymbol{e}_r^{\top}\boldsymbol{\Theta}_{ij}
        -[\mathbf{S}_r]_{ij}\right)^2
        +\lambda\sum_{r=1}^{R}\|\mathbf{S}_r\|_* \\
        \mathrm{s.t.}\quad
        &[\boldsymbol{\Phi}]_{r,k}\geq0,\quad \forall r,k.
    \end{aligned}
    \label{eq:iibtd_objective}
\end{equation}

The first term in the objective of problem~$\mathbf{P1}$ minimizes the weighted residuals between the PSD measurements and model predictions. The second term enforces consistency between the local estimate $\boldsymbol{e}_r^{\top}\boldsymbol{\Theta}_{ij}$ and $[\mathbf{S}_r]_{ij}$. The third term applies nuclear-norm regularization to $\mathbf{S}_r$ to promote low-rank propagation fields. Offline TD solves problem~$\mathbf{P1}$ by alternating the updates of $\{\boldsymbol{\Theta}_{ij}\}$, $\boldsymbol{\Phi}$, and $\{\mathbf{S}_r\}$ until convergence, then reconstructs $\widehat{\boldsymbol{\mathcal{H}}}=\sum_{r=1}^{R}\widehat{\mathbf{S}}_r\circ\widehat{\boldsymbol{\phi}}_r$.

\subsection{Spectrum Cartography via Online TD}

At slot $t$, these definitions are applied to the cumulative delivered set $\mathcal{D}_t$. As new packets reach the UGV, reapplying offline TD would repeatedly process $\mathcal{D}_{t-1}$ even when only a few samples arrive. The online TD method therefore reuses the previous estimates while retaining the block-alternating structure of Eq.~\eqref{eq:iibtd_objective}. Fully delivered measurements are accumulated into a batch for each reconstruction update; the estimates remain unchanged between updates. At an update, the local coefficients are recomputed only for cells affected by measurements accumulated since the previous reconstruction.

\emph{Update of $\boldsymbol{\Theta}_{ij}$:}
Let $\mathbf{M}_t^{\rm aff}\in\{0,1\}^{N_x\times N_y}$ be the affected-cell mask, where $[\mathbf{M}_t^{\rm aff}]_{ij}=1$ if the spatial-kernel support of grid cell $(i,j)$ covers at least one UAV sample in the accumulated batch. For each cell with $[\mathbf{M}_t^{\rm aff}]_{ij}=1$, online TD updates the local coefficients as
\begin{equation}
    \begin{aligned}
        \boldsymbol{\Theta}_{ij,t}
        ={}&\underset{\boldsymbol{\Theta}_{ij}}{\arg\min}\quad
        \left\|\mathbf{W}_{ij,t}\boldsymbol{r}_{ij,t}
        (\boldsymbol{\Theta}_{ij},\boldsymbol{\Phi}_{t-1})\right\|_2^2\\[-1mm]
        &+\nu\sum_{r=1}^{R}
        \left(\boldsymbol{e}_r^{\top}\boldsymbol{\Theta}_{ij}
        -[\mathbf{S}_{r,t-1}]_{ij}\right)^2.
    \end{aligned}
    \label{eq:seq_theta_update}
\end{equation}
For cells with $[\mathbf{M}_t^{\rm aff}]_{ij}=0$, online TD retains $\boldsymbol{\Theta}_{ij,t}=\boldsymbol{\Theta}_{ij,t-1}$. Reusing these coefficients avoids redundant computation and improves the efficiency of online updates.

\emph{Update of $\boldsymbol{\Phi}$:}
Because $\boldsymbol{\Phi}$ is shared across grid cells, online TD initializes it with $\boldsymbol{\Phi}_{t-1}$ and solves
\begin{equation}
    \begin{aligned}
        \boldsymbol{\Phi}_{t}
        =\underset{\boldsymbol{\Phi}\geq\mathbf{0}}{\arg\min}\quad
        &\sum_{(i,j)\in\mathcal{G}}
        \left\|\mathbf{W}_{ij,t}
        \boldsymbol{r}_{ij,t}
        (\boldsymbol{\Theta}_{ij,t},\boldsymbol{\Phi})
        \right\|_2^2.
    \end{aligned}
    \label{eq:seq_phi_update}
\end{equation}

\emph{Update of $\mathbf{S}_r$:}
Online TD forms $\boldsymbol{\Psi}_{r,t}\in\mathbb{R}^{N_x\times N_y}$ with $[\boldsymbol{\Psi}_{r,t}]_{ij}=\boldsymbol{e}_r^{\top}\boldsymbol{\Theta}_{ij,t}$ and updates $\mathbf{S}_r$ by solving the low-rank subproblem
\begin{equation}
    \mathbf{S}_{r,t}
    =
    \underset{\mathbf{S}_r}{\arg\min}
    \nu\sum_{(i,j)\in\mathcal{G}}
    \left([\boldsymbol{\Psi}_{r,t}]_{ij}-[\mathbf{S}_r]_{ij}\right)^2
    +\lambda\|\mathbf{S}_r\|_*.
    \label{eq:iibtd_sr_problem}
\end{equation}
This subproblem is solved by iterative singular value thresholding (SVT) initialized at $\mathbf{S}_{r,t-1}$. The warm start reuses the previous solution and reduces the number of iterations required for convergence.

To make the computational complexity differences more explicit, we consider square grids with $N_x=N_y=N_{\rm g}$ in the following analysis. Within each alternating iteration, the local coefficients $\boldsymbol{\Theta}_{ij}$ are updated only at affected grid cells by solving Eq.~\eqref{eq:seq_theta_update}. The shared spectral matrix $\boldsymbol{\Phi}$ is updated using residual information aggregated across the grid, as specified in Eq.~\eqref{eq:seq_phi_update}. For fixed $R$, $K$, $D_{\rm p}$, and a fixed iteration budget, these updates cost $\mathcal{O}(\|\mathbf{M}_t^{\rm aff}\|_0|\mathcal{D}_t|)$ and $\mathcal{O}(N_{\rm g}^2|\mathcal{D}_t|)$, respectively. In contrast, the update of $\mathbf{S}_r$ requires repeated dense singular value decomposition (SVD) operations. With $J_{\rm svt}$ iterations, its cost is $\mathcal{O}(RJ_{\rm svt}N_{\rm g}^3)$. Overall, the updates of $\boldsymbol{\Theta}$ and $\boldsymbol{\Phi}$ are relatively inexpensive, while repeated SVD operations make the update of $\mathbf{S}_r$ the main computational bottleneck.

\subsection{Deep-Unfolded Online Spectrum Cartography}

Deep unfolding converts a number of model-based iterations into a trainable sequence of network stages while preserving the algorithmic structure~\cite{monga2021algorithm}. In radio map estimation \cite{johnson2026factor}, it unfolds a factor-decomposed convex recovery algorithm into learnable blocks, improving reconstruction performance. Online TD retains an alternating-update structure, making it naturally suited to deep unfolding. Therefore, we adopt this approach to accelerate online TD. The preceding complexity analysis identifies the update of $\mathbf{S}_r$ as the dominant computational bottleneck. Accordingly, we unfold the alternating updates into a fixed number of stages and replace iterative SVT with a learned module, yielding online deep-unfolded tensor decomposition (ODU-TD). Our selective adaptation improves computational efficiency and estimation accuracy through learned updates. Moreover, retaining model-based local interpolation and spectral estimation preserves explicit relationships among measurements, source spectra, and local estimates. These structures provide the local analytical structure for interpolation-error analysis.

\begin{figure}[!t]
    \centering
    \includegraphics[width=0.92\columnwidth]{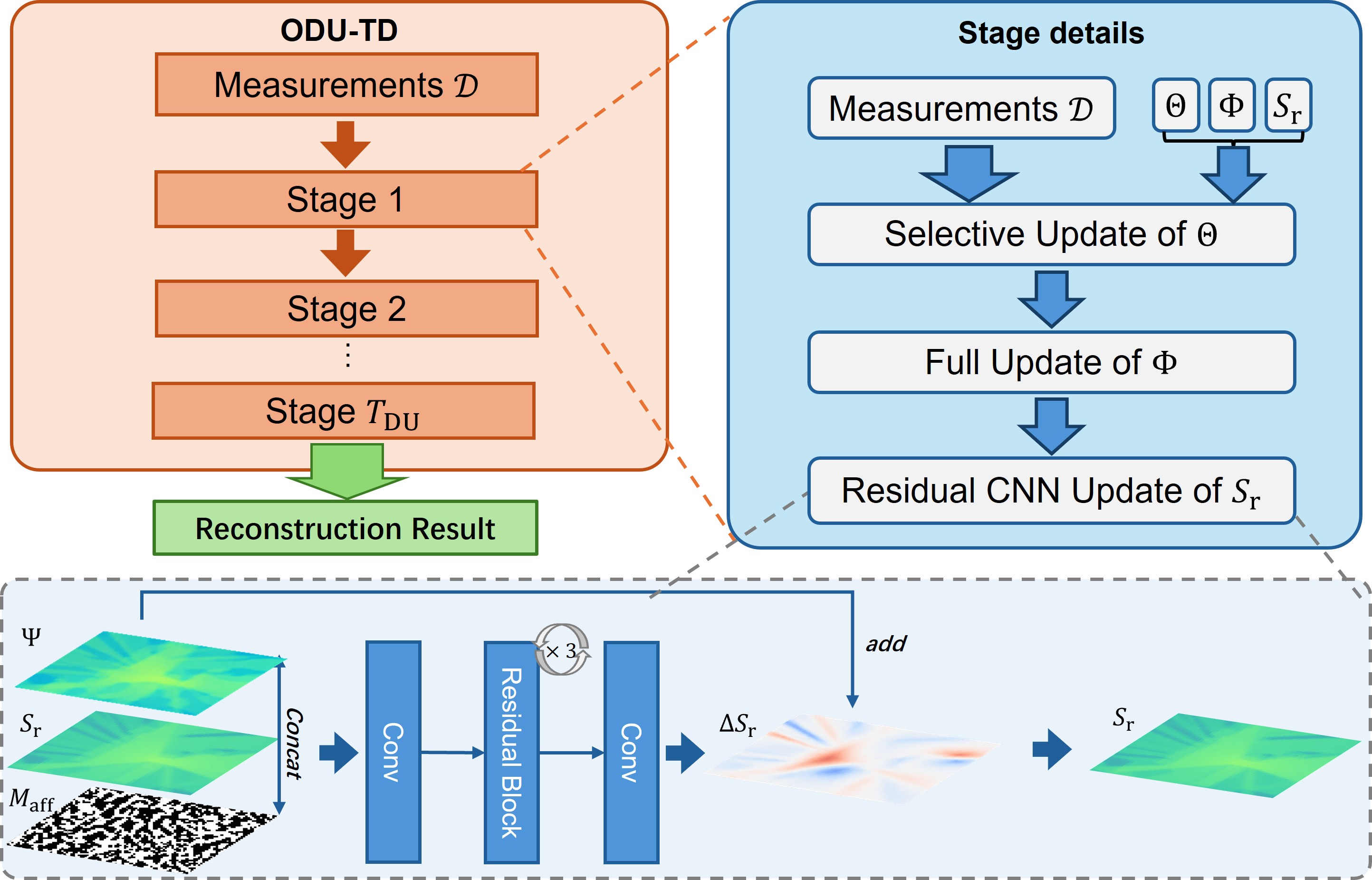}
    \caption{Architecture of the proposed ODU-TD network.}
    \label{fig:odutd_architecture}
\end{figure}

As illustrated in Fig.~\ref{fig:odutd_architecture}, ODU-TD uses a convolutional residual network to refine the spatial propagation fields. The convolutional residual network concatenates three inputs: the updated interpolation $\boldsymbol{\Psi}_r^{(\ell+1)}$, the previous-stage source map $\mathbf{S}_r^{(\ell)}$, and the affected-cell mask $\mathbf{M}_t^{\rm aff}$. This combines new information with accumulated spatial structure. Its convolutions capture neighborhood-dependent propagation patterns beyond global nuclear-norm regularization. An input convolution fuses them, residual blocks aggregate neighboring information, and an output convolution produces a residual
$\Delta\mathbf{S}_r^{(\ell+1)}=\operatorname{Prox}_{\vartheta_\ell}\{[\boldsymbol{\Psi}_r^{(\ell+1)},\mathbf{S}_r^{(\ell)},\mathbf{M}_t^{\rm aff}]\}$. The updated spatial propagation map is
\begin{equation}
    \begin{aligned}
        \bigl[\mathbf{S}_r^{(\ell+1)}\bigr]_{ij}
        ={}&\ln\!\left(
            1+e^{
            \bigl[\boldsymbol{\Psi}_r^{(\ell+1)}\bigr]_{ij}
            +\alpha_\ell\bigl[\Delta\mathbf{S}_r^{(\ell+1)}\bigr]_{ij}
            }
        \right),\\
        &\hspace{35mm}(i,j)\in\mathcal{G},
    \end{aligned}
    \label{eq:du_sr_update}
\end{equation}
where $\alpha_\ell$ scales the correction to the model-based estimate and Softplus enforces nonnegativity. 

Training combines map, source-map, and observation-level supervision through
\begin{equation}
    \mathcal{L}
    =
    \mathcal{L}_{\rm err}
    +\lambda_S\mathcal{L}_S
    +\lambda_{\rm obs}\mathcal{L}_{\rm obs},
    \label{eq:du_loss}
\end{equation}
where $\lambda_S$ and $\lambda_{\rm obs}$ are nonnegative weights. The primary loss $\mathcal{L}_{\rm err}$ is defined as the NMSE between the reconstructed and ground-truth PSD maps. The auxiliary term $\mathcal{L}_S$ supervises individual source maps, directly constraining spatial refinement beyond the combined tensor error. Finally, $\mathcal{L}_{\rm obs}$ penalizes mismatch at observed entries, discouraging refinements that conflict with the supplied measurements.

The learned module refines each source map through local convolutions and pointwise operations. For a convolutional neural network (CNN) with fixed depth, kernel sizes, and channel widths, these operations cost $\mathcal{O}(N_{\rm g}^2)$ per source map per stage. Refining $R$ source maps over $T_{\rm DU}$ stages therefore costs $\mathcal{O}(RT_{\rm DU}N_{\rm g}^2)$. In comparison, each dense SVD required by SVT costs $\mathcal{O}(N_{\rm g}^3)$. Table~\ref{tab:odu_td_complexity} compares the costs of map updates. For fixed iteration and stage counts, replacing SVT with the learned module reduces source-map refinement from cubic to quadratic complexity in $N_{\rm g}$. The fixed stage count also removes convergence-dependent repetition of the alternating updates. Algorithm~\ref{alg:odu_td_online} summarizes the complete update process.

\begin{table}[t]
    \centering
    \refstepcounter{algorithm}
    \label{alg:odu_td_online}
    \begingroup
    \normalsize
    \setlength{\tabcolsep}{0pt}
    \renewcommand{\arraystretch}{1.0}
    \begin{tabular}{@{}>{\raggedright\arraybackslash}p{\columnwidth}@{}}
        \toprule
        \textbf{Algorithm~\thealgorithm} Online Spectrum Cartography via ODU-TD \\
        \midrule
        \textbf{Input:} Cumulative delivered set $\mathcal{D}_t$, new measurements accumulated since the previous reconstruction, previous estimates $\{\boldsymbol{\Theta}_{ij,t-1}\}$, $\boldsymbol{\Phi}_{t-1}$, $\{\mathbf{S}_{r,t-1}\}$, and trained stage parameters $\{(\vartheta_\ell,\alpha_\ell)\}_{\ell=0}^{T_{\rm DU}-1}$ \\
        \textbf{Output:} Reconstructed map $\widehat{\boldsymbol{\mathcal{H}}}_t$ and updated estimates $\{\boldsymbol{\Theta}_{ij,t}\}$, $\boldsymbol{\Phi}_t$, $\{\mathbf{S}_{r,t}\}$ \\
        \begin{minipage}{\linewidth}
            \begin{algorithmic}[1]
                \STATE Form $\mathbf{M}^{\rm aff}_t$ from the sample locations in the accumulated batch
                \STATE Initialize $\boldsymbol{\Theta}_{ij}^{(0)}=\boldsymbol{\Theta}_{ij,t-1}$, $\boldsymbol{\Phi}^{(0)}=\boldsymbol{\Phi}_{t-1}$, and $\mathbf{S}_r^{(0)}=\mathbf{S}_{r,t-1}$
                \FOR{$\ell=0,\ldots,T_{\rm DU}-1$}
                \IF{$[\mathbf{M}^{\rm aff}_t]_{ij}=1$}
                \STATE Update $\boldsymbol{\Theta}_{ij}^{(\ell+1)}$ by Eq.~\eqref{eq:seq_theta_update}; 
                \ELSE
                \STATE set $\boldsymbol{\Theta}_{ij}^{(\ell+1)}=\boldsymbol{\Theta}_{ij}^{(\ell)}$
                \ENDIF
                \STATE Update $\boldsymbol{\Phi}^{(\ell+1)}$ by Eq.~\eqref{eq:seq_phi_update}
                \FOR{$r=1,\ldots,R$}
                \STATE Form $[\boldsymbol{\Psi}_r^{(\ell+1)}]_{ij}=\boldsymbol{e}_r^{\top}\boldsymbol{\Theta}_{ij}^{(\ell+1)}$
                \STATE Compute $\Delta\mathbf{S}_r^{(\ell+1)}=\operatorname{Prox}_{\vartheta_\ell}\{[\boldsymbol{\Psi}_r^{(\ell+1)},\mathbf{S}_r^{(\ell)},\mathbf{M}^{\rm aff}_t]\}$
                \STATE Update $\mathbf{S}_r^{(\ell+1)}$ by Eq.~\eqref{eq:du_sr_update}
                \ENDFOR
                \ENDFOR
                \STATE Set $\boldsymbol{\Theta}_{ij,t}=\boldsymbol{\Theta}_{ij}^{(T_{\rm DU})}$, $\boldsymbol{\Phi}_t=\boldsymbol{\Phi}^{(T_{\rm DU})}$, and $\mathbf{S}_{r,t}=\mathbf{S}_r^{(T_{\rm DU})}$
                \STATE Construct $\widehat{\boldsymbol{\mathcal{H}}}_t=\sum_{r=1}^{R}\mathbf{S}_{r,t}\circ\boldsymbol{\phi}_{r,t}$
            \end{algorithmic}
        \end{minipage} \\
        \bottomrule
    \end{tabular}
    \endgroup
\end{table}

\begin{table}[!ht]
    \centering
    \caption{Computational complexity of Online TD and ODU-TD.}
    \label{tab:odu_td_complexity}
    \begingroup
    \footnotesize
    \setlength{\tabcolsep}{3pt}
    \renewcommand{\arraystretch}{1.15}
    \begin{tabular*}{\columnwidth}{@{\extracolsep{\fill}}lcc@{}}
        \toprule
        Component & Online TD & ODU-TD \\
        \midrule
        Update passes & $J_{\rm alt}$ iterations & $T_{\rm DU}$ stages (fixed) \\
        Local fit $\boldsymbol{\Theta}$ & $\mathcal{O}(J_{\rm alt}\|\mathbf{M}_t^{\rm aff}\|_0|\mathcal{D}_t|)$ & $\mathcal{O}(T_{\rm DU}\|\mathbf{M}_t^{\rm aff}\|_0|\mathcal{D}_t|)$ \\
        Spectra $\boldsymbol{\Phi}$ & $\mathcal{O}(J_{\rm alt}N_{\rm g}^2|\mathcal{D}_t|)$ & $\mathcal{O}(T_{\rm DU}N_{\rm g}^2|\mathcal{D}_t|)$ \\
        Map refinement $\mathbf{S}_r$ & $\mathcal{O}(RJ_{\rm alt}J_{\rm svt}N_{\rm g}^3)$ & $\mathcal{O}(RT_{\rm DU}N_{\rm g}^2)$ \\
        Dense SVD count & $RJ_{\rm alt}J_{\rm svt}$ & $0$ \\
        \bottomrule
    \end{tabular*}
    \endgroup
\end{table}

\section{Measurement Quantization and Bit-Depth Selection}
\label{sec:quantization_error}
The UAV quantizes PSD measurements before sending them to the UGV to reduce the payload over the bandwidth-limited air--ground link. A lower bit depth reduces the number of transmitted bits but can increase quantization distortion and impair spectrum reconstruction. We therefore derive a bit-depth-dependent interpolation-error model to guide the trade-off between communication load and reconstruction accuracy.

\subsection{Log-Domain Quantization}

PSD measurements typically span a wide dynamic range, with most values concentrated near zero and a long tail toward higher powers~\cite{timilsina2024quantized}, as shown in Fig.~\ref{fig:psd_log_distribution}(\subref*{fig:psd_before_log}). Uniform quantization in the linear domain uses a constant absolute step across the entire range, resulting in coarse relative resolution for low-power measurements. We therefore apply a logarithmic transformation before uniform quantization to spread out these densely clustered low-power values, as illustrated in Fig.~\ref{fig:psd_log_distribution}(\subref*{fig:psd_after_log}). Equal steps in the logarithmic domain correspond to nonuniform levels in the original PSD domain, with finer absolute spacing at low powers and wider spacing at high powers~\cite{gray1998quantization}.

\begin{figure}[!ht]
    \centering
    \setlength{\abovecaptionskip}{2pt}
    \subfloat[Before log transform]{%
        \includegraphics[width=0.45\columnwidth,trim=12.78bp 14.58bp 6.84bp 4.68bp,clip]{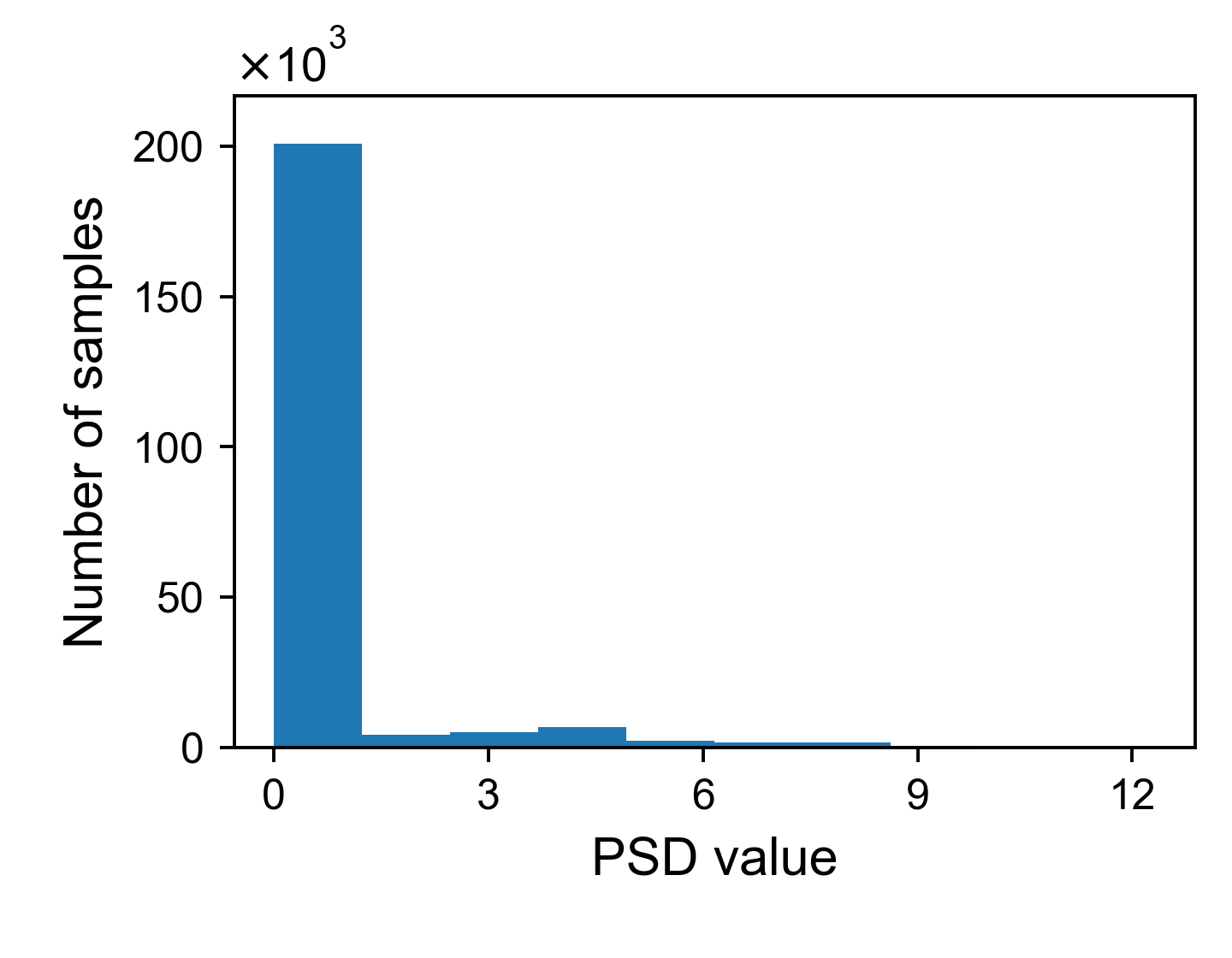}%
        \label{fig:psd_before_log}%
    }\hspace{0.012\columnwidth}%
    \subfloat[After log transform]{%
        \includegraphics[width=0.45\columnwidth,trim=12.78bp 14.58bp 6.84bp 4.68bp,clip]{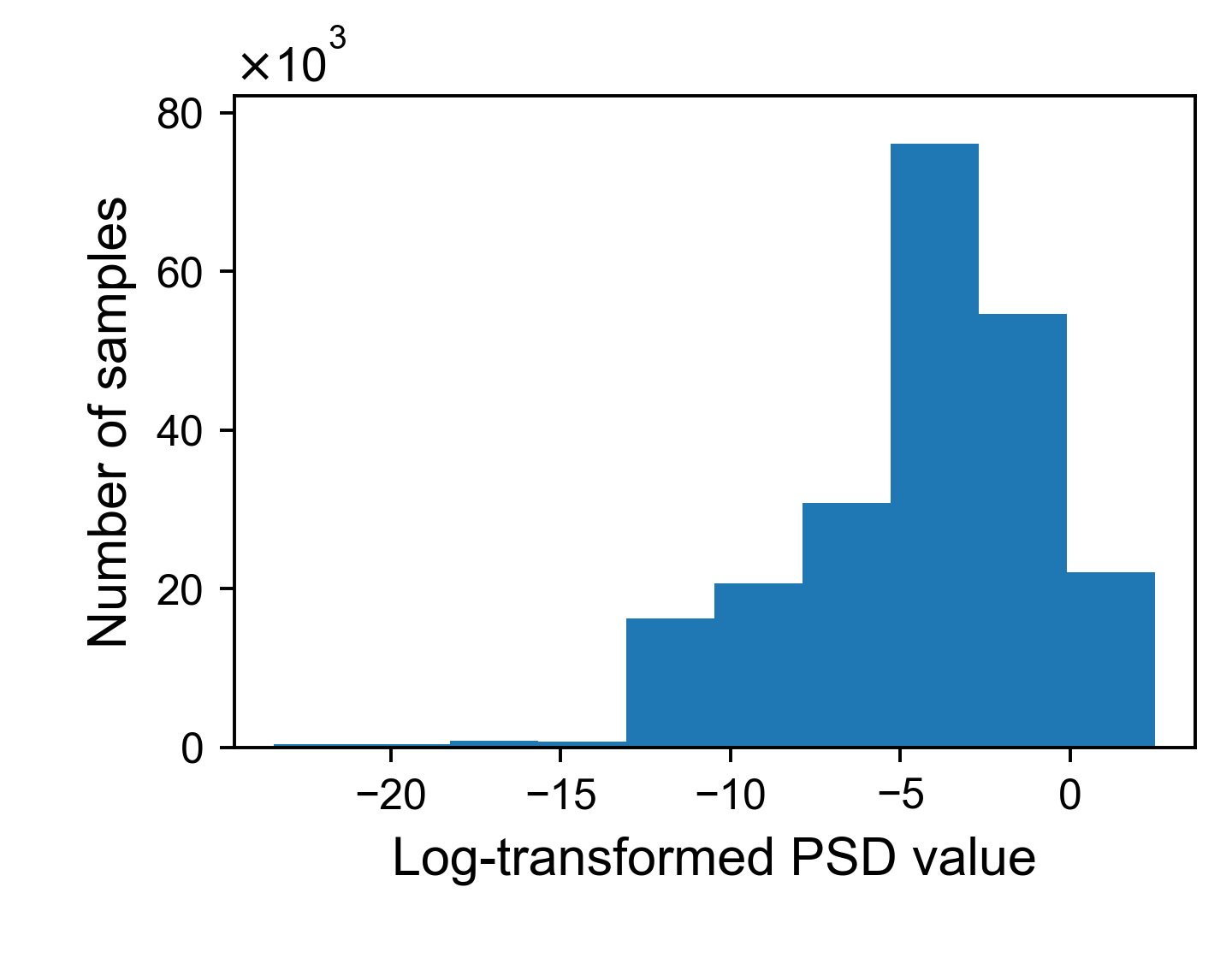}%
        \label{fig:psd_after_log}%
    }
    \caption{PSD distributions before and after the logarithmic transformation.}
    \label{fig:psd_log_distribution}
\end{figure}

To analyze how measurement quantization affects spectrum cartography performance, we apply a $b$-bit uniform quantizer after transforming the observation to the logarithmic domain. The same scalar quantization rule is applied separately to each PSD measurement, so the conditional error analysis has the same form across frequency bands. We therefore fix an arbitrary band $k$ and write $y_m\equiv y_m^{(k)}\geq0$ for the $m$-th linear-scale UAV PSD measurement, omitting $(k)$ from the associated quantization variables. Define the log-domain observation as $z_m=\ln(y_m+\epsilon_0)$, where $\epsilon_0>0$ avoids the logarithm of zero. For a given bit depth $b$, define the quantization interval over the fixed range $[z_{\min},z_{\max}]$ as $\Delta_z=(z_{\max}-z_{\min})/(2^b-1)$. After reception, the UGV obtains the dequantized log-domain observation
\begin{equation}
    \widehat z_m=z_{\min}+\Delta_z
\left\lfloor\frac{z_m-z_{\min}}{\Delta_z}\right\rceil,
\end{equation}
and the linear-domain observation is $\widehat y_m=e^{\widehat z_m}-\epsilon_0$.

We characterize the distortion using the high-resolution pseudo-quantization-noise model~\cite{widrow1996statistical,gray1998quantization}, which assumes that the log-domain quantization error is independent of the unquantized observation and uniformly distributed over one quantization interval. The corresponding linear-domain quantization error is $e_{q,m}=\widehat y_m-y_m$. The centered quantization error is defined as $\widetilde e_{q,m}(b) \triangleq e_{q,m} - \mathbb{E}\!\left[e_{q,m}\mid y_m\right].$

\begin{proposition}[\mdseries Equivalent linear-domain additive noise representation]
    \label{prop:log_quant_error}
    Under the high-resolution pseudo-quantization-noise model, the centered quantization error $\widetilde e_{q,m}(b)$ satisfies
    \begin{subequations}
        \begin{align}
            \mathbb{E}\!\left[
                \widetilde e_{q,m}(b)\mid y_m
            \right]
            &=0,
            \label{eq:equivalent_quant_noise_mean}\\
            \operatorname{Var}\!\left[
                \widetilde e_{q,m}(b)\mid y_m
            \right]
            &=
            \left(y_m+\epsilon_0\right)^2
            \left\{
                \frac{\Delta_z^2}{12}
                +O\!\left[\Delta_z^4\right]
            \right\}.
            \label{eq:prop1_log_error_variance_scaling}
        \end{align}
    \end{subequations}
\end{proposition}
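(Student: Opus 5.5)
Under the pseudo-quantization-noise model, write $\widehat z_m=z_m+u_m$ with $u_m\sim\mathcal U[-\Delta_z/2,\Delta_z/2]$ independent of $y_m$ (hence of $z_m$). The plan is to express the linear-domain error exactly in terms of $u_m$, then compute its conditional moments by a Taylor expansion of the exponential in the small parameter $\Delta_z$. Since $\widehat y_m=e^{z_m+u_m}-\epsilon_0$ and $y_m=e^{z_m}-\epsilon_0$, the error is
\begin{equation*}
e_{q,m}=e^{z_m}\bigl(e^{u_m}-1\bigr)=(y_m+\epsilon_0)\bigl(e^{u_m}-1\bigr).
\end{equation*}
Conditioned on $y_m$, the prefactor $(y_m+\epsilon_0)$ is deterministic, and all randomness sits in $e^{u_m}-1$.

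\textbf{Mean.} Equation~\eqref{eq:equivalent_quant_noise_mean} is immediate from the definition of the centered error, since $\mathbb E[\widetilde e_{q,m}\mid y_m]=\mathbb E[e_{q,m}\mid y_m]-\mathbb E[e_{q,m}\mid y_m]=0$. For completeness I will also record the bias. Independence gives $\mathbb E[e^{u_m}-1]=\frac{2}{\Delta_z}\sinh(\Delta_z/2)-1=\Delta_z^2/24+O(\Delta_z^4)$, so
\begin{equation*}
\mathbb E[e_{q,m}\mid y_m]=(y_m+\epsilon_0)\bigl\{\Delta_z^2/24+O(\Delta_z^4)\bigr\}.
\end{equation*}

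\textbf{Variance.} Centering does not change the variance, so
\begin{equation*}
\operatorname{Var}[\widetilde e_{q,m}\mid y_m]=(y_m+\epsilon_0)^2\operatorname{Var}[e^{u_m}].
\end{equation*}
It remains to compute $\operatorname{Var}[e^{u_m}]=\mathbb E[e^{2u_m}]-(\mathbb E[e^{u_m}])^2$ in closed form:
\begin{equation*}
\operatorname{Var}[e^{u_m}]=\frac{\sinh(\Delta_z)}{\Delta_z}-\frac{4\sinh^2(\Delta_z/2)}{\Delta_z^2}.
\end{equation*}
Expanding, $\sinh(\Delta_z)/\Delta_z=1+\Delta_z^2/6+\Delta_z^4/120+\cdots$. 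For the second term, $2\sinh(\Delta_z/2)/\Delta_z=1+\Delta_z^2/24+\Delta_z^4/1920+\cdots$, whose square is $1+\Delta_z^2/12+O(\Delta_z^4)$. Subtracting gives $\Delta_z^2/12+O(\Delta_z^4)$. Only even powers of $\Delta_z$ appear because both expressions are even functions of $\Delta_z$, so the remainder is genuinely $O(\Delta_z^4)$ and not $O(\Delta_z^3)$. This establishes \eqref{eq:prop1_log_error_variance_scaling}.

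\textbf{Main obstacle.} The delicate step is the bookkeeping of the Taylor remainders. The $\Delta_z^2$ terms must cancel correctly ($1/6-1/12$), and the error must be stated uniformly in $y_m$. This works because $y_m$ enters only through the exact prefactor $(y_m+\epsilon_0)^2$, so no uniformity issue arises. A secondary point to state explicitly is that the model-based $u_m$ ignores clipping outside $[z_{\min},z_{\max}]$. I will assume $z_m$ lies in range, which is consistent with the high-resolution model being invoked.
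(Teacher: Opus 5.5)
Your proposal is correct and follows the paper's proof essentially step by step: you write $e_{q,m}=(y_m+\epsilon_0)(e^{q}-1)$ with $q$ uniform on $[-\Delta_z/2,\Delta_z/2]$ and independent of $y_m$, use $\mathbb E[e^{q}]=2\sinh(\Delta_z/2)/\Delta_z$ and $\mathbb E[e^{2q}]=\sinh(\Delta_z)/\Delta_z$, and Taylor-expand. You go slightly further than the appendix by carrying out the subtraction $\Delta_z^2/6-\Delta_z^2/12=\Delta_z^2/12$ explicitly and by using evenness in $\Delta_z$ to justify the $O(\Delta_z^4)$ remainder, which the paper only asserts.
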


\begin{IEEEproof}
See Appendix~\ref{app:proof_log_quant_error}.
\end{IEEEproof}

\textbf{Remark:} Proposition~\ref{prop:log_quant_error} provides the equivalent noise model needed to relate quantization bit depth to local interpolation-error variance. Since quantization is performed in the logarithmic domain whereas the TD estimator operates on linear-scale measurements, the uniform log-domain quantization error cannot be directly used in the subsequent interpolation-error analysis. After the inverse logarithmic transformation, it becomes a linear-domain distortion consisting of a conditional bias $\mathbb{E}\!\left[e_{q,m}\mid y_m\right]$ and a centered quantization error $\widetilde e_{q,m}(b)$. The centered error can be directly incorporated into the interpolation-error analysis, while the bias does not affect the variance calculation.

\subsection{Reconstruction-Guided Bit-Depth Selection}

To identify suitable online bit depths, we examine how the quantization noise characterized in Proposition~\ref{prop:log_quant_error} affects the local TD estimate. Recall that $\boldsymbol{e}_r^{\top}\boldsymbol{\Theta}_{ij}$ estimates the propagation-field value of source $r$ at grid cell $(i,j)$. For measurements quantized at bit depth $b$, let $\widehat{\boldsymbol{\Theta}}_{ij}(b)$ be the resulting coefficient estimate and define the interpolation error as $e_{ij}^{r}(b)
    =
    \boldsymbol{e}_r^{\top}\widehat{\boldsymbol{\Theta}}_{ij}(b)
    -[\mathbf{S}_r]_{ij},$ 
which measures its deviation from the true propagation-field value. To isolate the effect of quantization bit depth, we analyze a local TD update with the sampling and estimation settings held fixed.

\begin{proposition}[\mdseries Bit-depth scaling of TD interpolation-error variance]
    \label{prop:quant_iibtd_error}
    Under the local TD estimator and the quantization noise described above, the interpolation-error variance has the following approximate dependence on quantization bit depth in the moderate-to-high-bit regime:
    \begin{equation}
        \operatorname{Var}\!\left[e_{ij}^{r}(b)\right]
        \approx
        a(2^b-1)^{-2}+c,
        \label{eq:high_bit_td_error_scaling}
    \end{equation}
    where $a\geq 0$ and $c$ are independent of $b$.
\end{proposition}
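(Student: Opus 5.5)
The idea is to hold $\boldsymbol{\Phi}$, the sample locations, the kernel weights and the masks fixed, and treat the local $\boldsymbol{\Theta}_{ij}$ update in Eq.~\eqref{eq:seq_theta_update} as a weighted ridge-type least-squares problem. Its solution is then linear in the measurement vector. Writing $\mathbf{A}_{ij}=\mathbf{W}_{ij}(\boldsymbol{\Phi}^{\top}\otimes\mathbf{X}_{ij}^{\top})$ and $\mathbf{E}=[\boldsymbol{e}_1,\ldots,\boldsymbol{e}_R]$, the normal equations give
\begin{equation*}
\widehat{\boldsymbol{\Theta}}_{ij}(b)=\left(\mathbf{A}_{ij}^{\top}\mathbf{A}_{ij}+\nu\mathbf{E}\mathbf{E}^{\top}\right)^{-1}\left(\mathbf{A}_{ij}^{\top}\mathbf{W}_{ij}\operatorname{vec}(\widehat{\boldsymbol{\Gamma}})+\nu\mathbf{E}\boldsymbol{s}_{ij}\right),
\end{equation*}
where $\boldsymbol{s}_{ij}$ stacks the previous $[\mathbf{S}_r]_{ij}$. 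Hence $\boldsymbol{e}_r^{\top}\widehat{\boldsymbol{\Theta}}_{ij}(b)=\boldsymbol{h}^{\top}\operatorname{vec}(\widehat{\boldsymbol{\Gamma}})+\text{const}$ for a gain vector $\boldsymbol{h}$ that does not depend on $b$. Here I am assuming that the invertibility of $\mathbf{A}_{ij}^{\top}\mathbf{A}_{ij}+\nu\mathbf{E}\mathbf{E}^{\top}$ is guaranteed by $\nu>0$ together with enough nearby samples.

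Next I decompose each quantized entry as
\begin{equation*}
\widehat y_m=y_m+\mathbb{E}[e_{q,m}\mid y_m]+\widetilde e_{q,m}(b),
\qquad
y_m=\mathcal{H}(\boldsymbol{z}_m,k)+n_m .
\end{equation*}
Substituting this decomposition splits $e_{ij}^r(b)$ into four parts:
\begin{itemize}
\item a deterministic model or polynomial-approximation bias, independent of $b$;
\item a fading and receiver-noise term $\boldsymbol{h}^{\top}\boldsymbol{n}$, independent of $b$;
\item a quantization-bias term $\boldsymbol{h}^{\top}\mathbb{E}[\boldsymbol{e}_q\mid\boldsymbol{y}]$;
\item a centered term $\boldsymbol{h}^{\top}\widetilde{\boldsymbol{e}}_q(b)$.
\end{itemize}
The pseudo-quantization-noise model makes the log-domain errors independent of the observations and of each other. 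Therefore the centered terms are conditionally zero-mean by Eq.~\eqref{eq:equivalent_quant_noise_mean} and uncorrelated across entries. The cross covariance of the centered term with any function of $\boldsymbol{y}$ vanishes by the tower property. The total variance thus becomes $\operatorname{Var}[\text{$b$-free part}+\text{bias part}]+\mathbb{E}\bigl[\sum_m h_m^2\operatorname{Var}(\widetilde e_{q,m}\mid y_m)\bigr]$.

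By Eq.~\eqref{eq:prop1_log_error_variance_scaling} the last term equals $\frac{\Delta_z^2}{12}\sum_m h_m^2\,\mathbb{E}[(y_m+\epsilon_0)^2]+O(\Delta_z^4)$. Since $\Delta_z=(z_{\max}-z_{\min})(2^b-1)^{-1}$, this is
\begin{equation*}
a(2^b-1)^{-2},\qquad a=\frac{(z_{\max}-z_{\min})^2}{12}\sum_m h_m^2\,\mathbb{E}[(y_m+\epsilon_0)^2]\ge 0,
\end{equation*}
up to higher-order terms. The $b$-free contributions are collected into $c$.

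The main obstacle is the conditional bias $\mathbb{E}[e_{q,m}\mid y_m]$, which depends on $b$. With $\widehat y_m=(y_m+\epsilon_0)e^{u}-\epsilon_0$ and $u$ uniform on $[-\Delta_z/2,\Delta_z/2]$, a Taylor expansion gives
\begin{equation*}
\mathbb{E}[e_{q,m}\mid y_m]=(y_m+\epsilon_0)\left(\frac{\Delta_z^2}{24}+O(\Delta_z^4)\right).
\end{equation*}
The bias part therefore adds a covariance with the $b$-free part of order $\Delta_z^2$ and its own variance of order $\Delta_z^4$. I would handle it in one of two ways. The first is to absorb the $\Delta_z^2$ covariance into $a$; that contribution is proportional to $\operatorname{Cov}$ of $\boldsymbol{h}^{\top}(\cdot)$ with $\boldsymbol{h}^{\top}\mathrm{diag}(\boldsymbol{y}+\epsilon_0)\mathbf{1}$, though this absorption could break the claim $a\ge0$. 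The second, cleaner route is to argue that in the moderate-to-high-bit regime the bias correlation term is negligible relative to $\Delta_z^2/12$, or to treat it as part of the same $(2^b-1)^{-2}$ coefficient. Dropping all $O(\Delta_z^4)=O((2^b-1)^{-4})$ terms then yields Eq.~\eqref{eq:high_bit_td_error_scaling}.
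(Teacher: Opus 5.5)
Your route is essentially the paper's. With the design, the weights and $\boldsymbol\Phi$ frozen, the local estimate is a $b$-independent linear functional of the dequantized data, so the centered PQN variance of Proposition~\ref{prop:log_quant_error} propagates into a term proportional to $\Delta_z^2=(z_{\max}-z_{\min})^2(2^b-1)^{-2}$.

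The paper specializes to one source with $\nu=0$. It writes the sandwich covariance $\mathbf A_1^{-1}\mathbb E(\boldsymbol\xi_{ij}\boldsymbol\xi_{ij}^{\top})\mathbf A_1^{-1}/(\sum_{k\in\Omega}\phi_k^2)^2$, reads off its $(1,1)$ entry, and only asserts the multi-source case. Your gain vector $\boldsymbol h$ covers $R>1$ and $\nu>0$ in one stroke. The cost is that treating $\nu\mathbf E\boldsymbol s_{ij}$ as a constant makes your variance conditional on the previous state $\mathbf S_{r,t-1}$. That is acceptable under the same fixed-design convention the paper applies to $\boldsymbol\Phi$, but the paper avoids it for $\mathbf S_{r,t-1}$ by taking $\nu=0$.

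The one loose end is the conditional bias. The paper's proof never propagates it at all, since it works only with the centered error.

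Your second option does not work. The bias cross term and the centered term are both of order $\Delta_z^2$, so no bit-depth regime makes one negligible relative to the other. Their ratio is governed by the noise-to-signal power, not by $b$.

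Your first option does work, and $a\ge0$ survives. Because the quantization range is fixed, all entries share the same $\Delta_z$. Hence $\mathbb E[e_{q,m}\mid y_m]=\kappa\,(y_m+\epsilon_0)$ with the common scalar $\kappa=2\sinh(\Delta_z/2)/\Delta_z-1\ge0$. The $b$-free part plus the bias part therefore equals $(1+\kappa)\boldsymbol h^{\top}\boldsymbol y$ up to a constant. The cross covariance you were worried about is $\kappa\operatorname{Var}(\boldsymbol h^{\top}\boldsymbol y)\ge0$, not of indefinite sign.

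To see it in one line, let $q$ be the log-domain error and $\widehat{\boldsymbol y}=\operatorname{vec}(\widehat{\boldsymbol\Gamma})$. The law of total variance gives
\[
\operatorname{Var}(\boldsymbol h^{\top}\widehat{\boldsymbol y})=\operatorname{Var}(e^{q})\sum_m h_m^2\,\mathbb E[(y_m+\epsilon_0)^2]+\bigl(\mathbb E[e^{q}]\bigr)^2\operatorname{Var}(\boldsymbol h^{\top}\boldsymbol y).
\]
Since $\operatorname{Var}(e^{q})=\Delta_z^2/12+O(\Delta_z^4)$ and $(\mathbb E[e^{q}])^2=1+\Delta_z^2/12+O(\Delta_z^4)$, you get $c=\operatorname{Var}(\boldsymbol h^{\top}\boldsymbol y)$ and $a=\frac{(z_{\max}-z_{\min})^2}{12}\bigl\{\sum_m h_m^2\,\mathbb E[(y_m+\epsilon_0)^2]+\operatorname{Var}(\boldsymbol h^{\top}\boldsymbol y)\bigr\}\ge0$.

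Once you commit to that step, your argument is slightly more complete than the paper's, because it accounts for the bias rather than dropping it.
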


\begin{IEEEproof}
See Appendix~\ref{app:proof_quant_iibtd_error}.
\end{IEEEproof}

The resulting interpolation error model gives an affine relationship between local interpolation-error variance and $x=(2^b-1)^{-2}$ for moderate to high bit depths. We then test offline whether NMSE follows the same dependence on $x$. For each observation ratio, we quantize randomly sampled measurements at different bit depths, reconstruct the PSD maps from these measurements using the fixed model, and fit ${\rm NMSE}=a_{\rho}x+c_{\rho}$ in the moderate-to-high-bit regime. Fig.~\ref{fig:quant_noise_fit} shows the fitting errors across the tested quantization bit depths. Quantization at 2 and 4 bits exhibits large deviations from the fitted model, whereas the other tested bit depths yield relatively small fitting errors. To balance quantization accuracy and transmission payload, we select $\mathcal{B}=\{6,8,10\}$ as the candidate bit depths for the UAV policy.

\begin{figure}[!t]
    \centering
    \includegraphics[width=\linewidth]{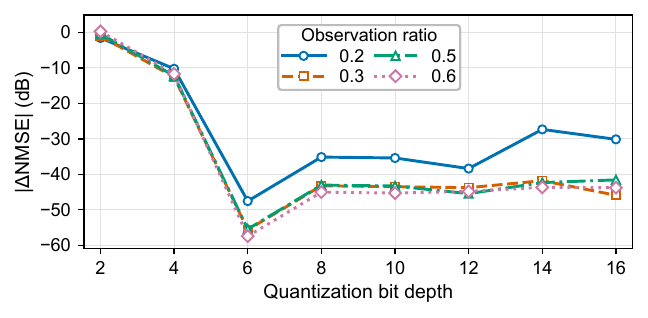}
    \caption{Deviation from the fitted NMSE model at different observation ratios, where $\Delta{\rm NMSE}={\rm NMSE}_{\rm quant}-(a_{\rho}x+c_{\rho})$.}
    \label{fig:quant_noise_fit}
\end{figure}

\section{Air--Ground Vehicle Cooperation}
\label{sec:air_ground_framework}

Urban blockage can interrupt measurement delivery, motivating the use of a mobile UGV to support communication and online reconstruction. This section integrates online reconstruction and quantization into an air--ground cooperation framework, coupling uncertainty-guided UAV sensing and delivery with UGV repositioning and support.

\subsection{Uncertainty-Guided UAV Sensing}

Ensemble methods combine predictions from multiple estimators to produce uncertainty using their disagreement. The UGV then evaluates $N_{\rm ens}$ members on the same delivered observations and each estimate is $\widehat{\boldsymbol{\mathcal{H}}}_t^{(e)}$. With $\sum_{e=1}^{N_{\rm ens}}w_{e,t}=1$, their weighted average is $\bar{\boldsymbol{\mathcal{H}}}_t=\sum_{e=1}^{N_{\rm ens}}w_{e,t}\widehat{\boldsymbol{\mathcal{H}}}_t^{(e)}$, and the resulting uncertainty is
\begin{equation}
    \boldsymbol{\mathcal{U}}_t(i,j,k)
    =
    \sum_{e=1}^{N_{\rm ens}}w_{e,t}
    \big[\widehat{\boldsymbol{\mathcal{H}}}_t^{(e)}(i,j,k)
    -\bar{\boldsymbol{\mathcal{H}}}_t(i,j,k)\big]^2.
    \label{eq:ensemble_uncertainty_map}
\end{equation}

Using the uncertainty map, the planner first averages over frequency to obtain the location-level uncertainty $[\mathbf{U}_t^{\rm sp}]_{ij}=K^{-1}\sum_{k=1}^{K}\boldsymbol{\mathcal{U}}_t(i,j,k)$. For identifying the target frequency band, the planner defines the frequency-aware score
\begin{equation}
    \boldsymbol{\mathcal{S}}_t^{\rm f}(i,j,k)
    =
    \boldsymbol{\mathcal{U}}_t(i,j,k)-\beta_f\boldsymbol{\mathcal{V}}_t(i,j,k),
    \label{eq:frequency_aware_score}
\end{equation}
where $\boldsymbol{\mathcal{V}}_t$ contains normalized visit counts and $\beta_f$ penalizes repeated frequency selections. The highest-scoring unobserved band at each candidate cell enters the joint target selection.

Let $\mathcal{C}_t^{\rm u}$ contain sampling-valid grid cells within the current search range that have at least one unobserved frequency band. For each candidate grid cell $(i,j)$, $\widehat{k}_t(i,j)$ denotes its unobserved band with the largest frequency-aware score in Eq.~\eqref{eq:frequency_aware_score}. The target location is selected as
\begin{equation}
    \boldsymbol{v}^\star
    =
    \arg\max_{(i,j)\in\mathcal{C}_t^{\rm u}}
    \left\{
    \lambda_u[\mathbf{U}_t^{\rm sp}]_{ij}
    +\boldsymbol{\mathcal{S}}_t^{\rm f}[i,j,\widehat{k}_t(i,j)]
    \right\}.
    \label{eq:uncertainty_target_selection}
\end{equation}
Here, $\lambda_u$ balances spatial and frequency-specific informativeness. Thus, $\boldsymbol{v}^\star$ and $\widehat{k}_t(\boldsymbol{v}^\star)$ specify the selected location and frequency band, respectively. To limit search complexity, $\mathcal{C}_t^{\rm u}$ is normally restricted to a configured Manhattan neighborhood.  If the relative improvement in mean local uncertainty remains below $0.08$ for two consecutive map updates, the planner temporarily searches globally and resumes local search once two nearby candidates are available.

\subsection{Predictive UGV Support Planning}
\label{sec:qpsp}

The sensing target is selected to guide the UAV, without considering whether the UGV can reach it for communication support. Simply following this target may fail to anticipate blockage along the UAV's route, delaying the UGV's response to link degradation. The UAV is responsible for sensing and measurement delivery, using a learned policy to jointly control bandwidth allocation, quantization bit depth, and movement. In contrast, the UGV has a simpler role: moving to suitable locations to provide communication and computing support. Therefore, the queue-aware predictive support planning (QPSP) planner is more suitable for UGV support than a learned policy.

The QPSP planner has two modes: predictive support and recovery, illustrated in Fig.~\ref{fig:qpsp_modes}(a) and (b), respectively. During predictive support, it predicts $x$-first and $y$-first Manhattan routes from $\boldsymbol{u}_t$ toward $\boldsymbol{v}^\star$, each limited to $H_{\rm p}$ UAV positions. Mapping these positions to the nearest traversable cells and removing duplicates yields the candidate support locations. For each candidate, the planner uses the large-scale channel model and current communication bandwidth to compute each route's link-availability ratio and mean capacity, assigning zero capacity below the outage threshold. Candidates within a specific Manhattan distance from the UGV receive priority. Within each priority group, candidates are ranked by 1) the lower link-availability ratio of the two routes, 2) the lower route-level mean capacity, and 3) the mean capacity across both routes, in that order. The highest-ranked location becomes the supporting target $\boldsymbol{q}_t^\star$.

The planner switches to recovery mode when queued data experience an outage, or a high backlog coincides with persistently insufficient service. During an outage or high backlog, it identifies the reachable ground cell nearest the current UAV position. A bounded A* search generates one ground path from the UGV toward this cell. The planner searches this path and its neighboring traversable cells for a communication support location to restore service. Once the link remains healthy and the backlog is sufficiently low, the planner resumes predictive support. In either mode, the UGV moves toward the current support goal $\boldsymbol{q}_t^\star$ along a four-neighbor A* path.

\begin{figure}[!h]
    \centering
    \setlength{\abovecaptionskip}{2pt}
    \subfloat[Predictive support]{%
        \includegraphics[width=0.485\columnwidth]{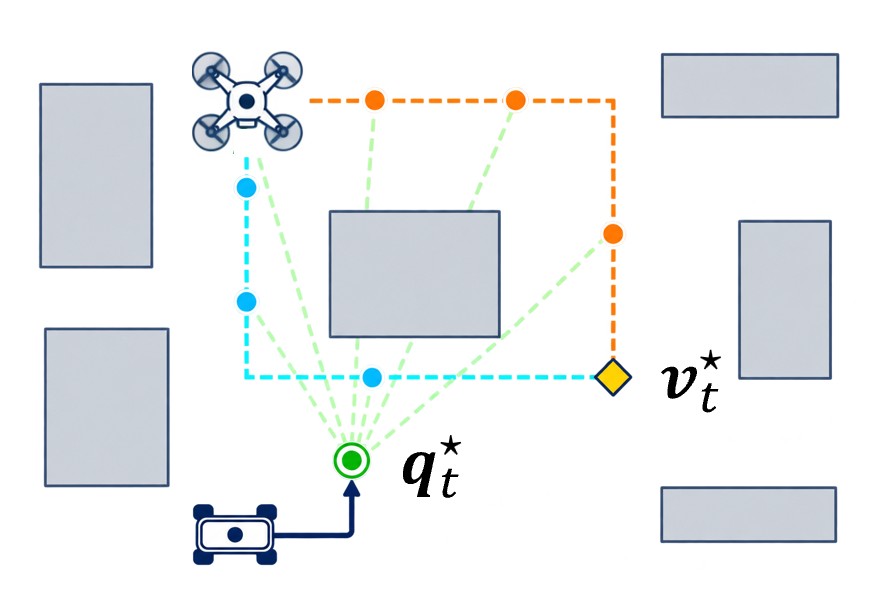}%
        \label{fig:qpsp_predictive}%
    }\hfill
    \subfloat[Recovery support]{%
        \includegraphics[width=0.485\columnwidth]{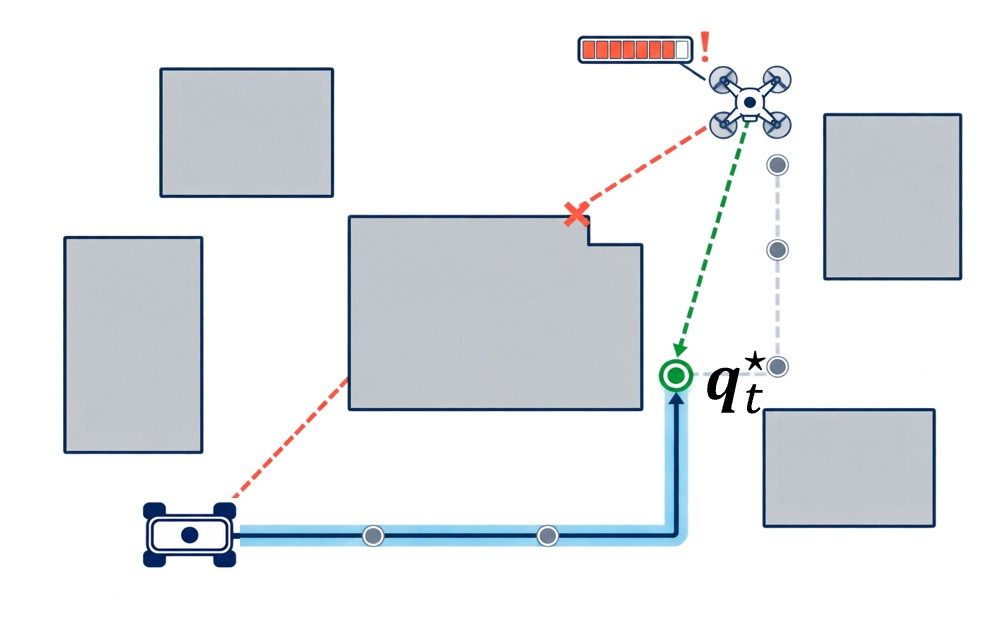}%
        \label{fig:qpsp_recovery}%
    }
    \caption{Predictive and recovery support planning for the UGV.}
    \label{fig:qpsp_modes}
\end{figure}

\subsection{UAV Policy Design and Training}
\label{sec:uav_policy_formulation}

With UGV support determined by $\mu$, the UAV policy learns to coordinate sensing and delivery from reconstruction and uncertainty information.

\textbf{1) Observation and state spaces:}
The actor observation $\boldsymbol{o}_t^{\rm u}$ combines reconstruction uncertainty and the sensing target with queue and channel states to coordinate informative sensing with measurement delivery. UAV position, remaining energy, current bandwidth allocation, and relative UAV--UGV geometry further inform movement and resource allocation. During training, the critic state $\boldsymbol{s}_t^{\rm c}$ supplements these observations with reconstruction NMSE, map-update status, and planner features to estimate the expected return.

\textbf{2) Action space:}
At slot $t$, the UAV action is defined as $\boldsymbol{a}_t^{\rm u}=(m_t^{\rm u},\rho_t,b_t)$. The movement component $m_t^{\rm u}\in\{\mathrm{stay},\mathrm{E},\mathrm{N},\mathrm{W},\mathrm{S}\}$ specifies whether the UAV remains stationary or moves east, north, west, or south. The other two components, $\rho_t\in\mathcal R$ and $b_t\in\mathcal B$, specify the sensing bandwidth ratio and quantization bit depth, respectively.

\textbf{3) Reward design:}
The reconstruction reward based on NMSE is sparse, as it is available only when the spectrum map is updated. To provide intermediate feedback and guide sensing and transmission decisions, we introduce auxiliary rewards and define the shaped reward as
\begin{equation}
    r_t
    =\lambda_{\rm N}r_t^{\rm N}
    +\lambda_{\rm Q}r_t^{\rm Q}
    +\lambda_{\rm P}r_t^{\rm P}
    +\lambda_{\rm R}r_t^{\rm R},
    \label{eq:ppo_total_reward}
\end{equation}
where $\lambda_{\rm N}$, $\lambda_{\rm Q}$, $\lambda_{\rm P}$, and $\lambda_{\rm R}$ balance the four reward components. The reconstruction reward $r_t^{\rm N}$ encourages NMSE reduction as delivered measurements update the map. The queue penalty $r_t^{\rm Q}$ discourages backlog and packet drops, promoting coordination between sensing workload and transmission service. To guide motion between map updates, $r_t^{\rm P}$ rewards progress toward the sensing target and penalizes moving away. Finally, $r_t^{\rm R}$ penalizes sensing that adds no new frequency observations at the sampled location, discouraging redundant use of sensing and transmission resources.

\textbf{4) Policy training:}
Compared with vanilla policy gradients, proximal policy optimization (PPO)~\cite{schulman2017ppo} discourages excessive policy updates through objective clipping, making it suitable for discrete UAV actions and asymmetric actor--critic training. For each update, generalized advantage estimation (GAE) uses the collected rewards and critic estimates to compute advantages and return targets. The training loss is
\begin{equation}
    \mathcal{L}(\vartheta_{\rm a},\vartheta_{\rm c})
    =
    \mathbb{E}_t\!\left[
        \mathcal{L}_t^{\rm A}(\vartheta_{\rm a})
        +c_{\rm v}\mathcal{L}_t^{\rm V}(\vartheta_{\rm c})
        +c_{\rm e}\mathcal{L}_t^{\rm E}(\vartheta_{\rm a})
    \right].
    \label{eq:ppo_training_loss}
\end{equation}
Here, $\mathcal L_t^{\rm A}$ is the negative clipped surrogate objective limiting large policy changes, and $\mathcal L_t^{\rm V}$ is the clipped value loss fitting return targets. The entropy loss $\mathcal L_t^{\rm E}$ is the negative policy entropy, whose minimization encourages exploration. The coefficients $c_{\rm v}$ and $c_{\rm e}$ weight the latter two terms. The actor and critic are optimized over minibatches of the collected interactions using this loss.
\begin{table}[!t]
    \centering
    \refstepcounter{algorithm}
    \label{alg:air_ground_training}
    \begingroup
    \normalsize
    \setlength{\tabcolsep}{0pt}
    \renewcommand{\arraystretch}{1.0}
    \begin{tabular}{@{}p{\columnwidth}@{}}
        \toprule
        \textbf{Algorithm~\thealgorithm} UAV Policy Training with UGV Support \\
        \midrule
        \textbf{Input:} Environment, pretrained ensemble, target planner, $\mu$, reconstruction batch size, $L$, $N_{\rm upd}$, $\mathcal{D}_0$ \\
        \textbf{Output:} Trained parameters $\vartheta_{\rm a}$, $\vartheta_{\rm c}$ \\
        \begin{minipage}{\linewidth}
            \begin{algorithmic}[1]
                \STATE Initialize $\vartheta_{\rm a}$, $\vartheta_{\rm c}$
                \FOR{$N_{\rm upd}$ policy updates}
                \STATE Clear the rollout buffer
                \FOR{$L$ interaction steps}
                \STATE Obtain $\boldsymbol{o}_t^{\rm u}$, $\boldsymbol{s}_t^{\rm c}$
                \STATE Sample feasible $\boldsymbol{a}_t^{\rm u}\sim\pi_{\vartheta_{\rm a}}(\cdot|\boldsymbol{o}_t^{\rm u})$
                \STATE Select $m_t^{\rm g}$ using $\mu$
                \STATE Execute UAV and UGV actions
                \STATE Update $Q_{t+1}$ by Eq.~\eqref{eq:queue_update}
                \STATE $\mathcal{D}_t\leftarrow\mathcal{D}_{t-1}\cup\Delta\mathcal{D}_t$
                \STATE Append $\Delta\mathcal{D}_t$ to the pending batch
                \IF{the reconstruction batch is ready}
                \STATE Update estimated maps by Algorithm~\ref{alg:odu_td_online}
               \STATE Compute $\boldsymbol{\mathcal{U}}_t$ by Eq.~\eqref{eq:ensemble_uncertainty_map}
                \ENDIF
                \STATE Compute $r_t$ by Eq.~\eqref{eq:ppo_total_reward}
                \STATE Refresh the sensing target
                \STATE Store rollout data and episode-end flags
                \ENDFOR
                \STATE Compute GAE advantages and returns
                \STATE Update $\vartheta_{\rm a}$, $\vartheta_{\rm c}$ by Eq.~\eqref{eq:ppo_training_loss}
                \ENDFOR
            \end{algorithmic}
        \end{minipage} \\
        \bottomrule
    \end{tabular}
    \endgroup
\end{table}

Algorithm~\ref{alg:air_ground_training} summarizes UAV policy training with a pretrained reconstruction ensemble and UGV controller $\mu$. Each episode starts with an initial map reconstructed from pre-mission measurements $\mathcal{D}_0$, whose uncertainty guides sensing target selection. The UAV then collects and transmits measurements with UGV support, while map updates at the UGV provide uncertainty feedback for subsequent decisions. After every $L$ interaction steps, GAE provides advantage and return estimates for PPO updates of the actor and critic. Training continues for $N_{\rm upd}$ policy updates.

\section{Experimental Evaluation}
\subsection{Dataset Setup}
\label{sec:datasets_setup}

We construct the PSD radio map dataset using large-scale propagation fields from the RadioMapSeer and ARM-Omni datasets. RadioMapSeer provides urban building layouts and physics-based radio propagation maps \cite{levie2021radiounet}, whereas ARM-Omni provides low-altitude aerial radio maps \cite{gao2026farm}. The propagation fields are cropped into $100\times100$ grid scenes.

\begin{figure}[!ht]
    \centering
    \includegraphics[width=0.96\columnwidth]{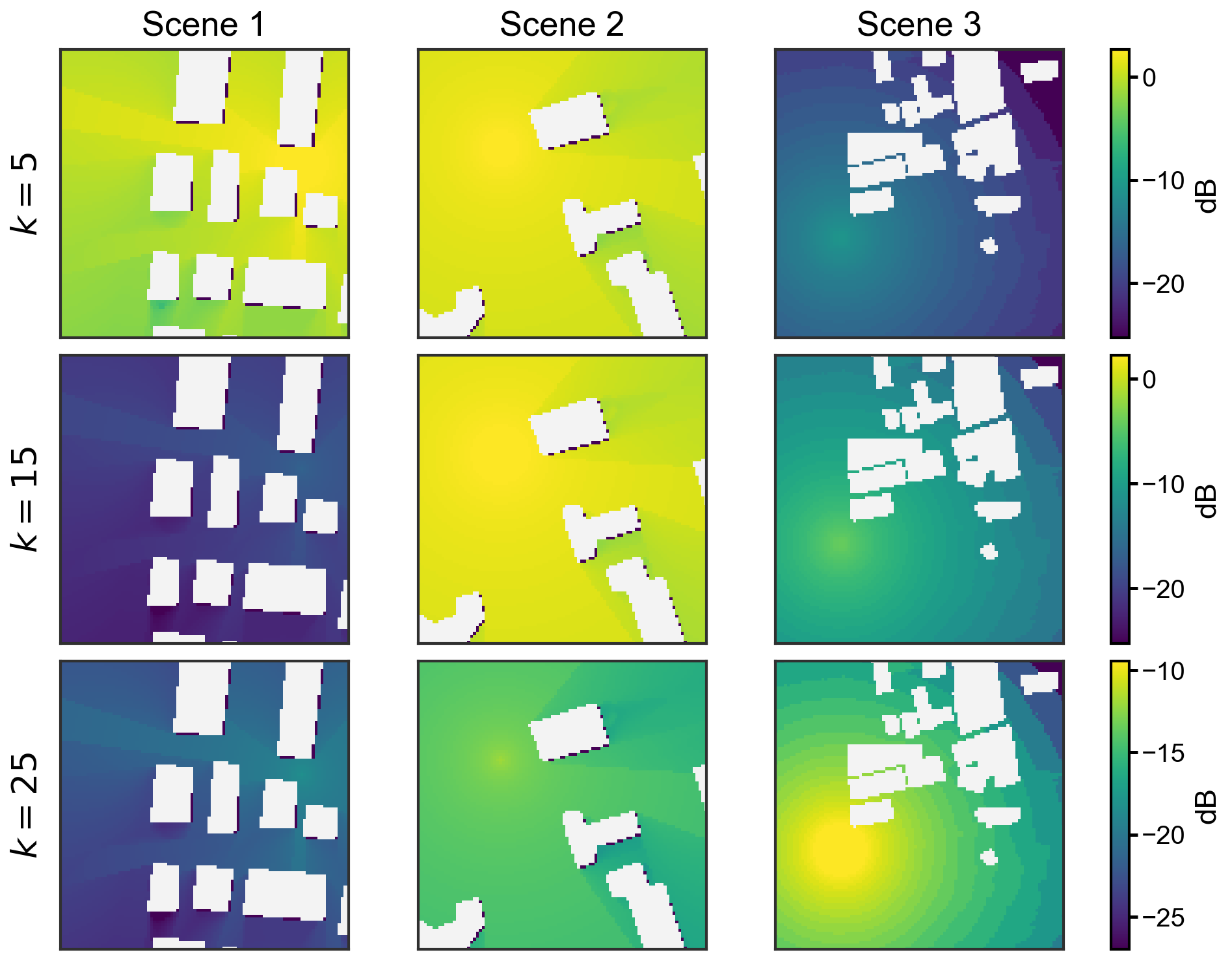}
    \caption{Representative ground-truth PSD maps from the constructed dataset. Columns show three scenes and rows show $k=5,15,25$.}
    \label{fig:dataset_ground_truth_examples}
\end{figure}

Following the TD model used in spectrum cartography \cite{zhang2020spectrum,chen2023offgrid,timilsina2024quantized,sun2024iibtd}, we independently synthesize eight nonnegative power spectra $\boldsymbol{\phi}\in\mathbb{R}_{+}^{K}$ for each spatial field, with $K=30$. Each spectrum is formed from squared sinc components and normalized such that $\sum_{k=1}^{K}\phi_k=K$. Pairing each spectrum with a spatial field according to Eq.~\eqref{eq:btd_map_model} yields eight dense $100\times100\times30$ PSD tensors per base scene. This construction preserves the source-field attenuation and blockage patterns while introducing diverse frequency-domain structures. Representative slices are shown in Fig.~\ref{fig:dataset_ground_truth_examples}. The resulting dataset contains 1,504 base scenes and 12,032 PSD tensors. Base scenes are divided into training, validation, and test subsets in an $80\%/10\%/10\%$ ratio. Tables~\ref{tab:environment_mobility_settings} and~\ref{tab:training_settings} summarize the simulation and training settings.

\begin{table}[!t]
    \centering
    \setlength{\abovecaptionskip}{2pt}
    \caption{Simulation Parameters.}
    \label{tab:environment_mobility_settings}
    \label{tab:sensing_communication_settings}
    \begingroup
    \footnotesize
    \setlength{\tabcolsep}{2.0pt}
    \renewcommand{\arraystretch}{0.94}
    \begin{tabular*}{\columnwidth}{@{\extracolsep{\fill}}lc@{\hspace{8pt}}lc@{}}
        \toprule
        \multicolumn{2}{c}{Environment and mobility} & \multicolumn{2}{c}{Sensing and communication} \\
        \cmidrule(lr){1-2}\cmidrule(lr){3-4}
        Parameter & Value & Parameter & Value \\
        \midrule
        Grid spacing $\Delta_{\rm cell}$ & 2 m/cell & Bandwidth $B$ & 100 MHz \\
        Building height & 25 m & Bandwidth units $N$ & 12 \\
        UAV altitude $h_{\rm UAV}$ & 30 m & Carrier frequency $f_{\rm c}$ & 3.5 GHz \\
        UAV step size & 4 cells & Outage threshold $\Upsilon_{\rm out}^{\rm dB}$ & $-5$ dB \\
        UGV step size & 5 cells & Buffer capacity $Q_{\max}$ & 512 Mbits \\
        Energy budget $E_{\max}$ & 9 kJ & Payload/band $G_{\rm band}$ & 8 Mbits \\
        Flight energy/cell $E_{\rm fly}$ & 12 J & Max sensing energy $E_{\rm sen}$ & 5 J \\
        Hovering energy $E_{\rm hov}$ & 8 J & &  \\
        \bottomrule
    \end{tabular*}
    \endgroup
\end{table}

\begin{table}[!t]
    \centering
    \setlength{\abovecaptionskip}{2pt}
    \caption{Training Hyperparameters.}
    \label{tab:training_settings}
    \begingroup
    \footnotesize
    \setlength{\tabcolsep}{2.0pt}
    \renewcommand{\arraystretch}{0.94}
    \begin{tabular*}{\columnwidth}{@{\extracolsep{\fill}}lc@{\hspace{8pt}}lc@{}}
        \toprule
        \multicolumn{2}{c}{ODU-TD training} & \multicolumn{2}{c}{PPO training} \\
        \cmidrule(lr){1-2}\cmidrule(lr){3-4}
        Parameter & Value & Parameter & Value \\
        \midrule
        Unfolded stages $T_{\rm DU}$ & 3 & Optimizer & Adam \\
        Optimizer & AdamW & Learning rate & $10^{-4}$ \\
        Learning rate & $10^{-4}$ & Discount factor $\gamma$ & 0.99 \\
        Batch size & 16 & GAE parameter $\lambda_{\rm GAE}$ & 0.95 \\
        Training epochs & 150 & Clip coefficient $\epsilon_{\rm PPO}$ & 0.2 \\
        & & Episode horizon $T$ & 160 \\
        & & Epochs per update & 6 \\
        \bottomrule
    \end{tabular*}
    \endgroup
\end{table}

\begingroup
\setlength{\textfloatsep}{8pt plus 2pt minus 2pt}
\setlength{\floatsep}{6pt plus 2pt minus 2pt}
\subsection{Online Spectrum Cartography Evaluation}

To evaluate the spectrum cartography algorithms in an online setting, we simulate sequential measurement arrivals by adding UAV measurements one at a time. Offline TD, online TD, and ODU-TD receive the same ordered measurements and reconstruct the same PSD map after each addition. We record NMSE after every update and measure module-wise and cumulative solver times over the complete sequence.

\begin{table}[!t]
    \centering
    \setlength{\abovecaptionskip}{2pt}
    \caption{Average runtime per reconstruction update.}
    \label{tab:runtime_update}
    \begingroup
    \footnotesize
    \setlength{\tabcolsep}{2.5pt}
    \begin{tabular*}{\columnwidth}{@{\extracolsep{\fill}}lcccc@{}}
        \toprule
        Method     & $\boldsymbol{\Theta}_{ij}$ (ms) & $\boldsymbol{\Phi}$ (ms) & $\mathbf{S}_r$ (ms) & Total (ms) \\
        \midrule
        ODU-TD          & \textbf{1.43}          & \textbf{1.19}        & \textbf{0.73}       & \textbf{3.35}       \\
        online TD       & 1.45          & 1.07        & 183.16     & 185.69     \\
        offline TD      & 15.13         & 2.34        & 403.51     & 420.98     \\
        \bottomrule
    \end{tabular*}
    \endgroup
\end{table}

Table~\ref{tab:runtime_update} shows that repeated SVT in the update of $\mathbf{S}_{r}$ dominated online TD, requiring about $183$ ms. Replacing SVT with the learned proximal module reduced a complete update from $185.69$ ms to $3.35$ ms. The comparable update times for $\boldsymbol{\Theta}$ and $\boldsymbol{\Phi}$ confirm that selectively replacing the $\mathbf{S}_r$ update removes the dominant bottleneck.

Fig.~\ref{fig:iibtd_solver_eval} shows that ODU-TD reduced NMSE more rapidly during early sampling and maintained a smoother trajectory below both baselines. Initially, limited measurements and spatial coverage leave local fits in offline TD and online TD insufficiently constrained. Newly added samples can substantially change these fits and affect estimates in less-observed regions through shared spectral factors and low-rank coupling, potentially causing temporary NMSE increases. In contrast, ODU-TD combines new measurement information with spatial structure captured in previous estimates, reducing sensitivity to individual sample additions. Its learned convolutional prior also captures local patterns beyond global nuclear-norm regularization, supporting both more stable updates and lower reconstruction error. Overall, ODU-TD required $1.58$ s, compared with $42.87$ s for online TD, achieving an approximately $27$-fold speedup by replacing repeated SVT with a fixed-depth convolutional update.

\begin{figure}[t]
    \centering
    \includegraphics[width=\linewidth]{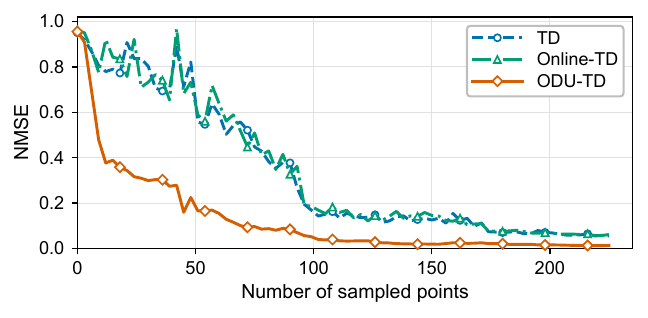}
    \caption{NMSE of spectrum cartography methods after each sample addition under identical ordered UAV measurements.}
    \label{fig:iibtd_solver_eval}
\end{figure}

Fig.~\ref{fig:solver_tensor_slices} compares the ground-truth and reconstructed PSD map slices using identical sampling locations across methods. ODU-TD achieved a final NMSE of $0.0063$, compared with $0.0259$ for offline TD and $0.0270$ for online TD. Both baselines produced stripe artifacts and blurred localized attenuation near building boundaries. ODU-TD better preserved these structures and exhibited fewer stripe-like artifacts.

\begin{figure}[t]
    \centering
    \includegraphics[width=\linewidth]{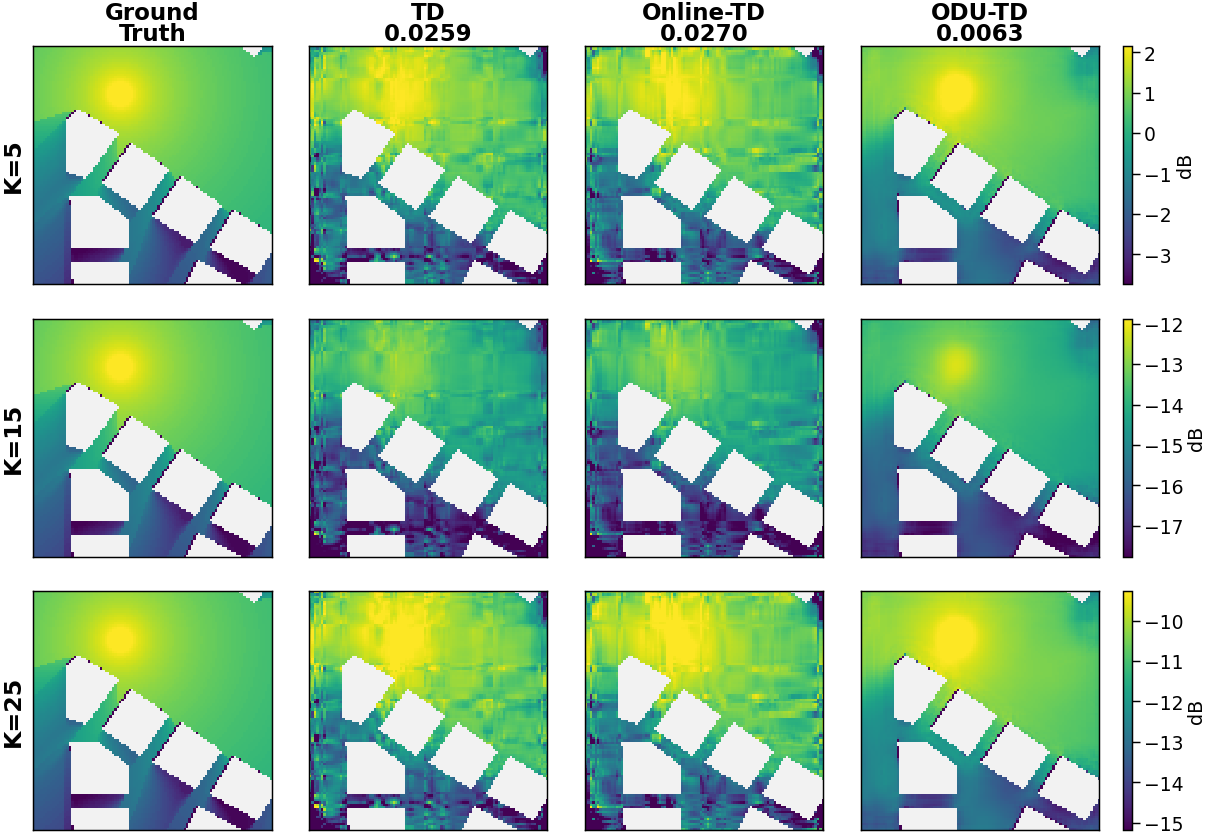}
    \caption{Representative PSD radio map slices at $k=5,15,25$ using identical sampling locations across methods.}
    \label{fig:solver_tensor_slices}
\end{figure}

\subsection{Air--Ground Cooperation Policy Evaluation}
\subsubsection{Compared Methods}

\noindent\textbf{Fixed-UGV:} The UGV remains at one support location, while the UAV uses a separately trained policy with the same actor architecture.

\noindent\textbf{Greedy-A*:} The UAV uses uncertainty-guided, non-learning motion with heuristic sensing and quantization choices. The UGV uses predictive dual-path A* support without recovery switching.

\noindent\textbf{Target-Only A*:} UAV-only PPO is paired with direct UGV routing toward the sensing target. This baseline omits predictive link scoring and recovery switching.

\noindent\textbf{MAPPO and IPPO:} Multi-agent PPO and independent PPO both learn the actions of the UAV and UGV~\cite{yu2022mappo,de2020ippo}.

\noindent\textbf{HAPPO:} Heterogeneous-agent PPO applies sequential policy updates to the UAV and UGV~\cite{kuba2022happo,deng2026happo}.

\noindent\textbf{Random:} Feasible actions are sampled uniformly for both the UAV and UGV.

\begin{figure}[t]   
    \centering
    \setlength{\abovecaptionskip}{2pt}
    \includegraphics[width=\columnwidth]{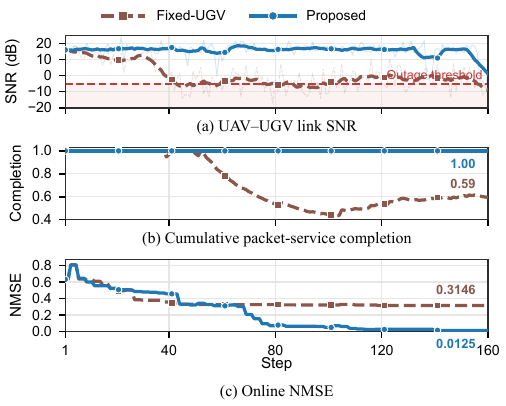}
    \caption{Slot-wise comparison of Fixed-UGV and the proposed framework in one episode with quantized transmission.}
    \label{fig:mobile_ugv_cooperation}
\end{figure}

\subsubsection{Evaluation Metrics}
\label{sec:evaluation_metrics}

The evaluation uses three complementary metrics to characterize reconstruction accuracy, finite-horizon measurement service, and air--ground link reliability.

\textbf{NMSE:} The NMSE in Eq.~\eqref{eq:nmse_def} measures the accuracy of the reconstructed PSD radio map and is therefore the primary task-level metric. A lower NMSE indicates more faithful recovery, whereas a higher value indicates worse reconstruction performance.

\textbf{Packet-service completion:} Let $B_{\rm prod}=\sum_{t=1}^{T}G_t$ be the total generated payload and $B_{\rm cmp}$ the payload contained in packets fully delivered by the mission end. The packet-service completion ratio is
\begin{equation}
    \eta_{\rm cmp}
    =
    \begin{cases}
        \dfrac{B_{\rm cmp}}{B_{\rm prod}}, & B_{\rm prod}>0,\\[4pt]
        1, & B_{\rm prod}=0
    \end{cases},
    \label{eq:completion_ratio}
\end{equation}
where $\eta_{\rm cmp}=1$ by convention when no payload is generated. For $B_{\rm prod}>0$, it measures the fraction of generated payload that is fully delivered and usable before the mission ends. It therefore evaluates how effectively sensing, quantization, and channel service are coordinated over the finite mission horizon. A high ratio indicates that the measurement workload generated by the selected sensing ratio and quantization bit depth can be supported by the available air--ground link. Conversely, a low ratio reveals a mismatch between the generated workload and the link service capacity, causing queue accumulation at the end of the mission.

\textbf{Outage ratio:} The fraction of mission slots in outage is
\begin{equation}
    \eta_{\rm out}
    =
    \frac{1}{T}\sum_{t=1}^{T}
    \mathbb{I}\!\left\{
    \Upsilon_t^{\rm dB}
    (\boldsymbol{u}_t,\boldsymbol{g}_t,\rho_t)
    <\Upsilon_{\rm out}^{\rm dB}
    \right\}.
    \label{eq:outage_ratio}
\end{equation}
A lower outage ratio indicates more persistent air--ground connectivity and more stable opportunities for measurement transmission. A higher ratio indicates recurrent service interruptions, increasing the risk of queue accumulation and packet loss. This metric is important because sustained link availability preserves the measurement flow required for continuous online map updates.

\begin{figure}[t]
    \centering
    \setlength{\abovecaptionskip}{2pt}
    \includegraphics[width=\columnwidth]{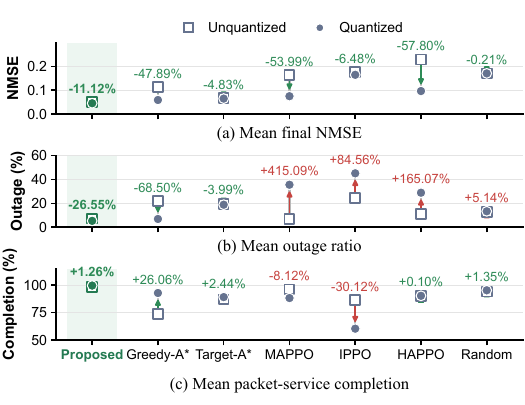}
    \caption{Paired performance of seven policies under unquantized (open squares) and quantized (filled circles) transmission.}
    \label{fig:policy_nmse_service_tradeoffs}
\end{figure}

\subsubsection{Mobile versus Fixed UGV Support}
Fig.~\ref{fig:mobile_ugv_cooperation} compares Fixed-UGV with the proposed framework in a representative episode with quantized transmission. Repeated Fixed-UGV outages produced a $21.88\%$ outage ratio, reduced packet-service completion to $59.19\%$, and left the final NMSE at $0.3146$. The proposed framework instead maintained zero outage and $100\%$ completion, allowing successive reconstruction updates and reducing the final NMSE to $0.0125$. The coupled changes in NMSE and completion show that mobile support improved reconstruction by sustaining measurement service as the UAV moved through blocked regions.

\begingroup
\setlength{\intextsep}{4pt}
\begin{table}[!htb]
    \centering
    \setlength{\abovecaptionskip}{2pt}
    \caption{Mean policy performance.}
    \label{tab:quantized_policy_comparison}
    \begingroup
    \footnotesize
    \setlength{\tabcolsep}{3.0pt}
    \renewcommand{\arraystretch}{1.06}
    \begin{tabular*}{\columnwidth}{@{\extracolsep{\fill}}lccc@{}}
        \toprule
        Method & NMSE $\downarrow$ & Outage (\%) $\downarrow$ & Completion (\%) $\uparrow$ \\
        \midrule
        Proposed          & \textbf{0.0454} & \textbf{5.26} & \textbf{99.62} \\
        Greedy-A*         & 0.0591 & 6.85 & 92.87 \\
        Target-Only A*    & 0.0649 & 18.80 & 89.16 \\
        MAPPO             & 0.0750 & 35.55 & 88.31 \\
        IPPO              & 0.1646 & 45.13 & 60.47 \\
        HAPPO             & 0.0966 & 28.85 & 90.34 \\
        Random            & 0.1706 & 13.31 & 95.34 \\
        \bottomrule
    \end{tabular*}
    \endgroup
\end{table}

\endgroup

\subsubsection{Impact of Quantized Transmission}
Fig.~\ref{fig:policy_nmse_service_tradeoffs} compares each policy before and after quantization. For the proposed framework, quantization reduced mean NMSE from $0.0511$ to $0.0454$ and outage from $7.16\%$ to $5.26\%$, while completion increased from $98.38\%$ to $99.62\%$. Greedy-A* exhibited the same favorable direction across all three metrics. By contrast, the multi-agent policies achieved lower NMSE but incurred higher outage, while MAPPO and IPPO also showed lower packet-service completion ratios. Thus, the reduced payload improved end-to-end performance only when the mobility strategy converted the released transmission capacity into sustained measurement service.

\subsubsection{Overall Performance}
Table~\ref{tab:quantized_policy_comparison} shows that the proposed framework achieved the lowest NMSE of $0.0454$ and outage ratio of $5.26\%$, both over $23\%$ lower than Greedy-A*. It also attained the highest completion ratio of $99.62\%$, a relative improvement of $4.49\%$ over the best baseline. Together, these results demonstrate that the framework achieves accurate online spectrum cartography while maintaining reliable links and delivering nearly all generated measurement payload in the evaluated blockage-prone urban scenarios.

\FloatBarrier

\endgroup

\section{Conclusions}
This paper developed an air--ground cooperative framework for active online spectrum cartography under UAV resource constraints and urban blockage. Deep-unfolded reconstruction enables timely map updates, while interpolation-error analysis guides quantization bit-depth selection to balance measurement fidelity and transmission load. Through an uncertainty-guided feedback loop, the framework coordinates simultaneous UAV sensing and transmission with mobile UGV communication and computing support. Experiments show that the proposed reconstruction method effectively accelerated online map updates. In urban scenes, the proposed framework demonstrated lower reconstruction error, fewer communication outages, and more complete measurement delivery than the evaluated baselines. Future work will address unknown, time-varying environments and extend the framework to support cooperation among multiple UAVs and UGVs.

\FloatBarrier

\appendices
\section{Proof of Proposition~\ref{prop:log_quant_error}}
\label{app:proof_log_quant_error}

\begin{IEEEproof}
Using the same notation as in Proposition~\ref{prop:log_quant_error}, let $\Delta=\Delta_z$ and define the log-domain quantization error as
$q\triangleq\widehat z_m-z_m$. Using
$\widehat z_m=z_m+q$, the inverse logarithmic transformation gives
$\widehat y_m=\left(y_m+\epsilon_0\right)e^{q}-\epsilon_0$.
Therefore, the linear-domain quantization error is  $ e_{q,m}=\left(y_m+\epsilon_0\right)(e^{q}-1).$

Under the pseudo-quantization-noise model, $q$ is independent of
$y_m$ and uniform over $[-\Delta/2,\Delta/2]$. Its first
exponential moment is $\mathbb E[e^{q}]
    =\frac{1}{\Delta}\int_{-\Delta/2}^{\Delta/2}e^{u}\,\mathrm du
    =\frac{2\sinh(\Delta/2)}{\Delta}.$
Similarly, $\mathbb E[e^{2q}]=\sinh(\Delta)/\Delta$.
The Taylor expansion
$\sinh(x)=x+x^3/6+O(x^5)$ at $x=0$ gives
$\mathbb E[e^{q}]=1+\Delta^2/24+O(\Delta^4)$. Since $q$ is independent of $y_m$, taking the conditional expectation of $e_{q,m}$ gives $\mathbb E\!\left[e_{q,m}\mid y_m\right]
    =\left(y_m+\epsilon_0\right)
    \left\{\frac{\Delta^2}{24}+O(\Delta^4)\right\}.$
By the preceding definition, the centered quantization error $\widetilde e_{q,m}(b)$ is given by $ \widetilde e_{q,m}(b)={}
    \left(y_m+\epsilon_0\right)
    \left\{e^{q}-\mathbb E[e^{q}]\right\}.$
By construction, its conditional mean is $\mathbb E[\widetilde e_{q,m}(b)\mid y_m]=0$.
The conditional variance can be expressed as
\begin{equation}
    \operatorname{Var}\!\left[
        \widetilde e_{q,m}(b)\mid y_m
    \right]
    =\left(y_m+\epsilon_0\right)^2
    \left[\frac{\Delta^2}{12}+O(\Delta^4)\right].
\label{eq:centered_log_quant_variance_proof}
\end{equation}
\end{IEEEproof}

\section{Proof of Proposition~\ref{prop:quant_iibtd_error}}
\label{app:proof_quant_iibtd_error}

\begin{IEEEproof}
We restore the frequency-band index $k$ for the joint analysis across bands. Assume that log-domain quantization errors are mutually independent across measurements and frequency bands and independent of the unquantized observations. For the centered errors in Proposition~\ref{prop:log_quant_error}, averaging the conditional variances gives the covariance in band $k$:
\begin{equation*}
    \boldsymbol\Sigma_{q,k}(b)
    \triangleq
    \operatorname{diag}\!\left(
    \left\{
    \mathbb{E}\!\left[
    \operatorname{Var}\!\left(
    \widetilde e_{q,m}^{(k)}(b)\mid y_m^{(k)}
    \right)\right]
    \right\}_m
    \right).
\end{equation*}

Following \cite{sun2024iibtd}, consider a single-source local estimate at grid cell $(i,j)$ with $\nu=0$ and a nonsingular normal matrix. Sampling locations, observed bands, spatial weights, source spectrum, and the unquantized measurement distribution are fixed as $b$ varies. Let $\Omega$ denote the observed band set. Suppressing the slot index $t$, define $\mathbf A_1=\mathbf X_{ij}\mathbf Q_{ij}^{2}\mathbf X_{ij}^{\top}$ and $\mathbf A_2=\mathbf X_{ij}\mathbf Q_{ij}^{4}\mathbf X_{ij}^{\top}$. 
Let $\widetilde{\boldsymbol r}_{ij}$ stack the spectrally scaled fading, receiver noise, and centered quantization error over $k\in\Omega$. With $\boldsymbol\phi=[\phi_k]_{k\in\Omega}^{\top}$, the noise term in the local normal equations is $\boldsymbol\xi_{ij}\triangleq(\boldsymbol\phi^{\top}\otimes\mathbf X_{ij}\mathbf Q_{ij}^{2})\widetilde{\boldsymbol r}_{ij}$. The quantization component embedded in
$\widetilde{\boldsymbol r}_{ij}$ contributes
\begin{equation}
    \begin{aligned}
    \mathbf C_{q,ij}(b)
    \triangleq
    \sum_{k\in\Omega}\phi_k^2
    \mathbf X_{ij}\mathbf Q_{ij}^{2}
    \boldsymbol\Sigma_{q,k}(b)
    \mathbf Q_{ij}^{2}\mathbf X_{ij}^{\top}.
    \end{aligned}
    \label{eq:weighted_quantization_covariance}
\end{equation}
Assuming that the fading, receiver-noise, and centered quantization errors
are mutually uncorrelated gives
\begin{equation}
    \begin{aligned}
    \mathbb E(\boldsymbol\xi_{ij}\boldsymbol\xi_{ij}^{\top})
    ={}&
    \left(
    \sigma_\eta^2\sum_{k\in\Omega}\phi_k^4
    +\sigma_\epsilon^2\sum_{k\in\Omega}\phi_k^2
    \right)\mathbf A_2 +\mathbf C_{q,ij}(b).
    \end{aligned}
    \label{eq:iibtd_weighted_residual_covariance}
\end{equation}
The local coefficient estimate $\widehat{\boldsymbol\theta}_{ij}^{r}(b)$ has the normal matrix
    $\sum_{k\in\Omega}\phi_k^2\mathbf A_1$. Propagating
Eq.~\eqref{eq:iibtd_weighted_residual_covariance} through this estimator
yields
\begin{equation}
    \operatorname{Cov}\!\left[
    \widehat{\boldsymbol\theta}_{ij}^{r}(b)
    \right]
    =
    \frac{
    \mathbf A_1^{-1}
    \mathbb E(\boldsymbol\xi_{ij}\boldsymbol\xi_{ij}^{\top})
    \mathbf A_1^{-1}
    }{
        \left(\sum_{k\in\Omega}\phi_k^2\right)^2
    }.
    \label{eq:iibtd_parameter_covariance}
\end{equation}
Taking expectations in Eq.~\eqref{eq:prop1_log_error_variance_scaling} gives
\begin{equation}
    \begin{aligned}
    \boldsymbol\Sigma_{q,k}(b)
    ={}&\left\{\frac{\Delta_z^2}{12}
    +O\!\left[\Delta_z^4\right]\right\}\\
    &\times\operatorname{diag}\!\left(
    \left\{\mathbb{E}\!\left[\left(y_m^{(k)}+\epsilon_0\right)^2\right]\right\}_m
    \right).
    \end{aligned}
    \label{eq:local_quantization_covariance_expansion}
\end{equation}
Substituting Eq.~\eqref{eq:local_quantization_covariance_expansion} into
Eq.~\eqref{eq:weighted_quantization_covariance} and then into
Eq.~\eqref{eq:iibtd_parameter_covariance} yields
\begin{equation}
    \operatorname{Cov}\!\left[
    \widehat{\boldsymbol\theta}_{ij}^{r}(b)
    \right]
    \approx
    \mathbf C_{0,ij}
    +\Delta_z^2\mathbf C_{\Delta,ij},
    \label{eq:iibtd_parameter_covariance_scaling}
\end{equation}
where the bit-depth-independent matrices $\mathbf C_{0,ij}\succeq\mathbf0$ and
$\mathbf C_{\Delta,ij}\succeq\mathbf0$ capture the fading and receiver-noise
contribution and the leading quantization contribution, respectively.

Because the reference value is deterministic and the local polynomial basis is centered at $(i,j)$, the interpolation-error variance equals the first diagonal entry of the covariance in Eq.~\eqref{eq:iibtd_parameter_covariance_scaling}. Using $\Delta_z=(z_{\max}-z_{\min})/(2^b-1)$, we obtain
\begin{equation*}
    \operatorname{Var}\!\left[e_{ij}^{r}(b)\right]
    \approx a(2^b-1)^{-2}+c,
\end{equation*}
where $c=[\mathbf C_{0,ij}]_{1,1}$ and
$a=(z_{\max}-z_{\min})^2[\mathbf C_{\Delta,ij}]_{1,1}\geq0$.
Under the same fixed-design and centered-noise assumptions, nonsingular multi-source block updates change only the bit-depth-independent coefficient matrices, preserving this scaling for each source component.
\end{IEEEproof}

\bibliographystyle{IEEEtran}
\bibliography{refs}

\end{document}